\documentclass[iicol]{sn-jnl}

\usepackage[T1]{fontenc}
\usepackage{amsmath,amssymb,amsfonts,amsthm}
\usepackage{booktabs}
\usepackage{multirow}
\usepackage{graphicx}
\usepackage{xcolor}
\usepackage{algorithm}
\usepackage{algpseudocode}
\usepackage{tikz}
\usepackage{pgfplots}
\usepackage{url}
\usepackage{hyperref}
\usepackage{thmtools}
\usepackage{pifont}
\usepackage{tabularx}

\newcommand{\method}{RareSense}
\newcommand{\methodfixed}{RareSense-Fixed}
\newcommand{\X}{\mathcal{X}}
\newcommand{\T}{\mathcal{T}}
\newcommand{\A}{\mathcal{A}}

\newcommand{\Rset}{\mathcal{R}}
\newcommand{\supp}{\operatorname{supp}}
\newcommand{\conf}{\operatorname{conf}}
\newcommand{\lift}{\operatorname{lift}}
\newcommand{\TopK}{\operatorname{TopK}}

\newcommand{\argmax}{\operatorname*{arg\,max}}

\theoremstyle{thmstyleone}%
\newtheorem{theorem}{Theorem}
\newtheorem{proposition}[theorem]{Proposition}%

\theoremstyle{thmstyletwo}%

\theoremstyle{thmstylethree}%

\begin{document}

\title[RareSense: Rarity-Aware Similarity for Anomaly Retrieval]{RareSense: Rarity-Aware Similarity Search for Anomaly Retrieval in Transactional Data}


\author*{\fnm{Sidahmed} \sur{Benabderrahmane}}\email{sidahmed.benabderrahmane@nyu.edu}

\author{\fnm{Talal} \sur{Rahwan}}\email{talal.rahwan@nyu.edu}

\affil{\orgaddress{New York University NYUAD,} \orgdiv{Department of Computer Science, Division of Science} }

\abstract{
Similarity search over sparse set-valued data is often dominated by frequent
background attributes because classical measures such as Jaccard, cosine, and Hamming compare objects through atomic overlap. IDF (Inverse document frequency) weighting
partially reduces this effect but remains atom-wise and cannot explicitly
represent informative higher-order co-occurrences. We introduce \method{}, a
rarity-aware similarity framework for sparse transactional anomaly data.
\method{} mines minimal rare itemsets as intermediate structures, derives
reliable rare association rules, maps objects into sparse rare-rule profiles,
and compares them using weighted Jaccard similarity. Rule weights combine
inverse support, confidence, lift, structural complexity, and stability, so
that neighborhoods are determined by shared rare evidence rather than uniform
feature overlap. We show that IDF-weighted Jaccard is a restricted singleton
case of \method{}, and that the induced distance is a pseudometric on the
original objects and a metric over equivalence classes defined by identical
rule profiles. Experiments across four benchmark families spanning cybersecurity and
general categorical domains show that \method{} attains the highest
observed macro-average query-conditioned retrieval performance among
the evaluated similarity measures. The statistical analysis indicates
significant overall differences, with corrected paired comparisons
favoring \method{} over the atomic baselines. The gains remain
workload-dependent and are strongest when anomalies share repeatable
rare higher-order structure. For global anomaly ranking, \method{}
achieves the highest observed macro-average performance while remaining
statistically comparable to several strong dedicated detectors.
}

\keywords{Similarity search, anomaly
retrieval, rare patterns, pattern-based
similarity, explainability}



\maketitle

\section{Introduction}
\label{sec:intro}

Similarity search is a fundamental primitive in data mining, information
retrieval, and anomaly analysis~\cite{zezula2006similarity,chavez2001searching,sakar25}.
Given a query object, a system returns objects that are close according to a
similarity model \cite{Xu25,Enen25}.  In anomaly-analysis workflows this operation appears in two
related forms \cite{Cheng25}.  First, suspicious objects may be globally ranked for inspection.
Second, once an analyst confirms an anomaly, that object can be used as a query
to retrieve related suspicious cases.  The second setting is especially
important operationally: analysts rarely inspect an entire ranking, so the
quality of the first few retrieved objects matters more than global separation
alone.

This paper studies sparse \emph{transactional} anomaly data.  Each object is
represented as a finite set of atoms, such as active binary features, one-hot
attribute--value indicators, discretized numerical bins, log templates, event
types, or session-level indicators \cite{Qiao25,Huang25,ammar2025foundation,haider2025survey}. Such representations
arise naturally in network-traffic and provenance-based security
analysis \cite{samariya2023comprehensive,qiao2025deep}.

Classical set and vector similarities compare objects through atomic overlap.
Jaccard~\cite{jaccard1912distribution}, Dice~\cite{dice1945measures}, Hamming \cite{cha2007comprehensive},
cosine \cite{santini2002similarity}, and IDF-weighted \cite{lan2022research} variants are natural choices.  IDF reduces the
influence of frequent atoms, but it remains atom-wise: the importance of an atom
does not depend on the other atoms with which it co-occurs.  This is limiting
when anomaly evidence is combinatorial.  For example, \texttt{login\_failure},
\texttt{outbound}, and \texttt{unusual\_port} may each be individually common
while their conjunction is rare and operationally meaningful.

We propose \method{}, a rarity-aware similarity model that moves the comparison
from the raw transactional space to a \emph{rare-rule evidence space}.  Rare
itemsets are used as an intermediate mining substrate.  Reliable association
rules derived from them define the final coordinates.  Each object is represented
by the retained rules whose complete evidence sets occur in the transaction, and
two objects are compared by the weighted overlap of those rule profiles.

The central distinction is between \emph{rare scoring} and \emph{rare
similarity}.  A pointwise rarity score asks:
\begin{quote}
Which objects contain strong rare evidence?
\end{quote}
\method{} asks:
\begin{quote}
Which objects share rare evidence with this query?
\end{quote}
The latter is a similarity-search question.  It can retrieve objects that are
related through the same rare conjunction even when their atomic overlap is
dominated by common background features.

A key methodological choice follows from this objective.  We treat
\textbf{nDCG@10 (Normalized Discounted Cumulative Gain) as the primary evaluation criterion} \cite{jarvelin2008discounted}.  AUROC is useful for
measuring global anomaly--normal separability, but it can remain high even when
all relevant anomalies occur too deep in the ranking to be inspected.  For
example, if every anomaly is ranked above almost every normal object but the
first anomaly appears only at rank 100, AUROC can be excellent while
nDCG@10 is zero.  For anomaly search, the latter reflects the operational
failure more directly.

The contributions are:

\begin{itemize}
    \item We introduce a rare-rule representation for similarity search over
    sparse transactional data.  Minimal rare itemsets are intermediate mining
    structures; reliable rare rules form the final similarity coordinates.

    \item We define a weighted-Jaccard similarity over rare-rule profiles and
    establish its metric interpretation on the profile space.  We also give a
    constructive result showing how higher-order coordinates can resolve
    atom-level similarity ties.

    \item We make pairwise explanation exact rather than post hoc: each shared
    rule has an additive contribution to the final similarity, and the top
    contributions form a faithful symbolic explanation of the retrieved
    neighbor.

    \item We define a reproducible adaptive configuration that uses an unlabeled
upper-tail rarity proxy to set the support ceiling and maximum pattern
length per workload. 

    \item We evaluate query-by-anomaly retrieval on 27 workloads from four
    benchmark families against four like-for-like query-conditioned atomic
    similarities---Jaccard, IDF-Jaccard, cosine, and TF--IDF cosine---and
    report scalar anomaly detectors only as secondary operational references.
    Global anomaly ranking is evaluated separately with AUROC.

    \item We identify a boundary condition of rarity-aware similarity: when a
    relevance label is already encoded by common atomic features, or when
    anomaly profiles share too little rare evidence, classical atomic
    similarities can be stronger. This limitation motivates hybrid
    atomic--rare similarity as a natural extension rather than obscuring a
    negative result.
\end{itemize}

\section{Related Work}
\label{sec:related}

RareSense lies at the intersection of similarity search, rare-pattern mining,
anomaly detection, and interpretable retrieval. These areas have largely
developed independently. Similarity-search methods typically operate on the
original feature representation; rare-pattern methods primarily aim to discover
unusual structures or convert them into scalar anomaly scores; and conventional
anomaly detectors rank objects according to query-independent abnormality.
Our work connects these perspectives by using statistically qualified rare
patterns to \emph{define the similarity representation itself}. Consequently,
the same symbolic representation supports query-conditioned retrieval, global
anomaly ranking, and explanations based on the rare evidence shared by a query
and its retrieved candidates.

\subsection{Set similarity and similarity search}

Set-based similarity is a fundamental primitive for comparing binary,
categorical, and transactional objects \cite{amer2020set}. Classical coefficients such as
Jaccard~\cite{jaccard1912distribution} and Dice~\cite{dice1945measures}
measure the degree of overlap between two sets, while cosine and
Hamming-derived measures provide related comparisons for binary vector
representations \cite{norouzi2012hamming}. These methods are simple, interpretable, and computationally
attractive, but all observed attributes are treated as the basic coordinates of
the comparison. Consequently, similarity is determined primarily by overlap in
individual atoms rather than by whether objects share higher-order behavioral
structures.

Term-weighting schemes provide a first mechanism for distinguishing common from
informative attributes. In particular, IDF-style weighting
~\cite{salton1988term} reduces the contribution of frequently occurring
features. Weighted Jaccard \cite{li2021rejection} and related Tanimoto formulations generalize
set overlap to non-uniform, non-negative feature weights
~\cite{marczewski1958certain,levandowsky1971distance,willett1998chemical}.
Efficient approximation techniques such as weighted MinHash further make such
weighted similarities practical at scale~\cite{ioffe2010improved}. These
approaches, however, still assume that the coordinates being compared are given
in advance. Weighting changes the importance of a feature, but not the
semantics of the feature space itself.

A complementary line of research addresses the efficient execution of
similarity queries. In metric spaces, the triangle inequality enables pruning
through index structures such as M-trees~\cite{ciaccia1997mtree} and
VP-trees~\cite{yianilos1993vptree}. More general similarity spaces often rely
on sequential scans, filtering, hashing, or approximate search \cite{zezula2006similarity,chavez2001searching}. 

Locality-sensitive hashing
and MinHash are classical examples of approximation mechanisms for
nearest-neighbor and Jaccard-like retrieval
~\cite{broder1997resemblance,indyk1998ann}. These methods focus primarily on
\emph{how} to search efficiently once a similarity representation has been
specified.

RareSense addresses a different and complementary question: \emph{what should
the coordinates of the similarity space represent when similarity is intended
to retrieve anomalous objects?} Rather than comparing objects directly in the
original atomic feature space, we construct a symbolic coordinate system whose
dimensions correspond to retained rare-rule evidence. Two objects are therefore
similar when they share statistically informative rare structures, not merely
when they share many individual attributes. This distinction is central:
RareSense is not simply another weighting of Jaccard over the original
features; it first transforms the representation through rare-pattern mining
and then performs weighted similarity in the resulting pattern space.

The resulting representation also preserves a useful connection to classical
similarity-search machinery. Once the rare-pattern dictionary is fixed, each
object is mapped to a sparse binary activation profile, and weighted Jaccard
can be evaluated over these profiles. Moreover, inverted postings over active
rare coordinates allow candidates with no shared evidence to be filtered
before explicit similarity evaluation. Thus, RareSense combines a
pattern-derived representation with established principles of sparse
set-similarity retrieval.

\subsection{Rare-pattern mining and pattern-based anomaly detection}

Pattern mining traditionally focuses on discovering combinations of items that
occur together in transactional data. Frequent-pattern mining, beginning with
association-rule discovery~\cite{agrawal1993rules} and scalable algorithms such
as FP-Growth~\cite{han2000fpgrowth}, emphasizes recurring co-occurrences above
a minimum-support threshold. Such patterns are valuable for summarizing
dominant regularities, but frequent structures are not necessarily the most
informative ones for anomaly-oriented retrieval.

Rare-pattern mining reverses this perspective by targeting low-support
combinations~\cite{szathmary2007rare}. Rare itemsets can expose unusual
co-occurrences that would be removed by conventional frequent-pattern
thresholding. Related work has investigated mechanisms for identifying
sporadic or rare rules~\cite{koh2006ararm} and for improving the efficiency of
rare-itemset discovery~\cite{troiano2009discovering}. These studies establish
that low-frequency combinations can carry information that is not visible from
marginal feature frequencies alone. However, rarity by itself does not
necessarily imply usefulness: an extremely infrequent pattern may reflect
noise, accidental co-occurrence, or an unstable combination. Measures such as
confidence and lift can therefore complement support by indicating whether the
items forming a rule exhibit reliable dependence.

Existing pattern-based anomaly-detection methods exploit pattern statistics
primarily to assign a scalar abnormality score. FPOF
~\cite{he2005fpof}, for example, relates an object's abnormality to the frequent
patterns in which it participates. Other categorical or distribution-oriented
methods operate at the feature level. Attribute Value Frequency
(AVF)~\cite{koufakou2007avf} evaluates an object through the empirical
frequencies of its observed categorical values. HBOS
~\cite{goldstein2012hbos} models feature-wise histograms, ECOD
~\cite{li2022ecod} uses empirical cumulative distribution functions, and COPOD
~\cite{li2020copod} models multivariate tail probabilities through a
copula-based formulation. Although their assumptions differ, their common
output is a query-independent scalar anomaly score.

RareSense uses rare-pattern mining for a fundamentally different purpose.
The mined structures are not merely intermediate statistics used to increase
or decrease an anomaly score. Instead, they become the \emph{coordinates of a
new similarity space}. Minimal rare structures identify unusual
co-occurrences, while retained rules statistically qualify those structures.
An object activates a symbolic coordinate when it contains the evidence
associated with the corresponding rule. Similarity can then be computed
between two objects according to which rare coordinates they jointly activate.

This distinction has several consequences. First, rarity is transformed from a
unary property---``how unusual is this object?''---into a pairwise relation:
``which unusual structures do these two objects share?'' Second, the
representation separates pattern discovery from retrieval: the rare-pattern
dictionary can be mined once and then reused across many queries. Third, the
same coordinates provide human-readable evidence for the resulting similarity.
RareSense therefore connects rare-pattern mining with similarity search rather
than using rare patterns only as ingredients of a conventional detector.

Another important difference concerns the role of individual rare itemsets.
In RareSense, rare itemsets serve primarily as intermediate structures for
discovering candidate higher-order evidence. The final retrieval coordinates
are defined by statistically qualified rule evidence rather than by retaining
every rare conjunction as an independent feature. This avoids simply expanding
the original transaction space with a potentially redundant collection of
itemsets and instead constructs a compact symbolic representation in which
rarity, confidence, lift, and structural complexity can contribute to the
importance of each coordinate.

\subsection{Anomaly detection, ranking, and retrieval}

Unsupervised anomaly detection has produced a broad range of paradigms.
Local-density approaches such as LOF~\cite{breunig2000lof} identify observations
whose neighborhoods differ from those of nearby points. One-class methods
estimate the support of the normal data distribution
~\cite{scholkopf2001svm}, while isolation-based methods detect observations
that can be separated rapidly through recursive partitioning
~\cite{liu2008iforest}. Distance-based formulations rank observations according
to their distance from neighboring or reference objects
~\cite{ramaswamy2000knn,aggarwal2001outlier}.

More recent work has extended anomaly detection through richer statistical and
learned representations. Data-depth methods characterize how centrally or
peripherally an observation lies in a multivariate distribution
~\cite{Mozharovskyi25}. Generative approaches such as ALGAN
~\cite{Bashar25} exploit adversarial learning for time-series anomaly
detection. Deep Isolation Forest~\cite{xu2023dif} combines learned
representations with isolation-based detection, while LUNAR
~\cite{goodge2022lunar} learns to unify local outlier signals through graph
neural networks. These approaches improve the capacity to model nonlinear or
complex data distributions, but their principal objective remains the
estimation of an anomaly score for each individual observation.

This distinction between \emph{anomaly ranking} and \emph{anomaly retrieval} is
important. A detector answers a unary question,
\[
\text{``How anomalous is }x\text{?''}
\]
whereas similarity retrieval answers a pairwise, query-dependent question,
\[
\begin{gathered}
\text{``Which objects are most similar}\\
\text{to a given query } q\text{?''}
\end{gathered}
\]
A highly anomalous object is not necessarily the most relevant neighbor of
another anomaly. Two anomalies may originate from entirely different
mechanisms, while a moderately ranked candidate may share precisely the rare
behavioral structure exhibited by the query. Consequently, sorting candidates
by a global detector score is not equivalent to query-conditioned similarity
search.

RareSense explicitly supports both views while keeping them conceptually
separate. Its rare-pattern representation induces a global rarity-based score
that can be used for conventional anomaly ranking, but its primary retrieval
mechanism compares the rare profiles of a query and candidate directly. This
makes it possible to retrieve anomalies that are \emph{structurally related to
the query}, rather than simply returning the objects with the largest global
outlier scores.

In our evaluation, conventional detectors are therefore included as important
secondary operational references rather than treated as like-for-like
similarity functions. Their rankings indicate how well a user would perform by
inspecting globally suspicious objects, whereas Jaccard-family similarities and
RareSense directly answer a query-conditioned retrieval task.

The distinction is especially relevant in applications where analysts already
possess one suspicious example and seek related cases. Examples include
incident investigation, retrospective threat hunting, fraud analysis, and
diagnostic case retrieval. In such settings, the practical objective is often
not to inspect the entire anomaly ranking but to prioritize a small set of
candidates most relevant to the current case. The anomaly-detection literature
has long emphasized the importance of ranking quality under limited inspection
budgets~\cite{chandola2009survey,pang2021deep,emmott2015meta}. This motivates
our emphasis on top-$k$ retrieval and nDCG@10: relevant anomalous objects should
not merely appear somewhere in the ranking, but should be concentrated near
the top where they can realistically be inspected.

RareSense contributes to this setting by unifying anomaly evidence and
similarity evidence in one representation. Unlike a pipeline in which an
anomaly detector first selects objects and an unrelated similarity measure is
then applied afterward, the same rare coordinates that characterize atypical
structure also determine pairwise relatedness. The retrieval score therefore
has a direct semantic connection to the evidence that makes the objects
unusual.

\subsection{Explainability and symbolic retrieval}

Explainability methods generally seek to clarify the behavior of predictive
models. Post-hoc approaches such as SHAP~\cite{lundberg2017shap} assign feature
contributions to individual predictions, while LIME~\cite{ribeiro2016lime}
approximates a complex predictor locally using an interpretable surrogate.
Counterfactual explanations instead identify changes to an input that would
alter a model decision~\cite{wachter2017counterfactual}. These methods address
important questions such as why a classifier produced a given prediction or
what would need to change for its decision to be different.

Our explainability objective is different. RareSense does not primarily ask why
an individual object received a particular anomaly label. Instead, it explains
\emph{why a particular candidate was retrieved as similar to a particular
query}. This distinction matters because pairwise retrieval explanations
require identifying evidence shared by two objects rather than attributing a
single model output to features.

The RareSense similarity is constructed directly from the weighted overlap of
activated rare-rule coordinates. Consequently, each non-zero contribution to
the similarity score corresponds to explicit symbolic evidence shared by the
query and candidate. A retrieval result can therefore be accompanied by the
specific rare structures responsible for its similarity, together with their
weights and associated statistics. The explanation is thus intrinsic to the
retrieval computation rather than generated afterward by a separate surrogate
model.

This property also differs from explanations based solely on shared atomic
features. Reporting that two objects both contain an individual attribute may
be insufficient when that attribute is common in the data. RareSense instead
can expose higher-order evidence such as a rare conjunction or statistically
qualified rule that is jointly activated by both objects. The explanation
therefore reflects the same higher-order structure that determined the
retrieval score.

Because explanation and retrieval share the same representation, no additional
post-hoc explainer is required, and there is no discrepancy between the
features used to compute similarity and those shown to the analyst. This
provides a form of \emph{explanation-by-construction}: the retrieved objects,
their similarity values, and their symbolic explanations are all derived from
the same rare-pattern dictionary.

\subsection{Positioning of RareSense}

The preceding literature reveals a gap between four established research
directions. Classical similarity measures provide efficient pairwise comparison
but generally operate on predefined atomic features. 

Rare-pattern mining
discovers unusual higher-order structures but is typically used for pattern
discovery or scalar anomaly scoring. Conventional anomaly detectors provide
powerful global rankings but do not naturally define query-conditioned
relatedness between anomalous objects. Finally, post-hoc explainability methods
can interpret model outputs but are external to the similarity computation
itself.

RareSense bridges these directions by treating mined rare structures as a
\emph{similarity representation}. Its main distinction from existing approaches
can be summarized as
\[
\begin{gathered}
\text{rare-pattern discovery}\\
\downarrow\\[-2pt]
\text{symbolic coordinates}\\
\downarrow\\[-2pt]
\text{weighted similarity and retrieval}
\end{gathered}
\]
This design yields three capabilities from the same representation:
(i) query-conditioned retrieval based on shared rare evidence,
(ii) global anomaly ranking from aggregate rare-pattern evidence, and
(iii) intrinsic symbolic explanations obtained directly from the coordinates
contributing to the similarity.

Accordingly, the novelty of RareSense does not lie simply in proposing another
rare-pattern score or another weighted Jaccard variant. The key contribution is
the \emph{construction of a rarity-aware similarity space}: higher-order,
statistically qualified rare evidence replaces individual observed attributes
as the semantic basis of comparison. This makes similarity itself sensitive to
unusual structure and turns rare-pattern mining from an anomaly-scoring
mechanism into a reusable foundation for explainable similarity search.
\section{Problem Setting}
\label{sec:problem}

Let $X=\{x_1,\ldots,x_n\}$ be a collection of objects.
An object may be a network flow, an IoT traffic record, a system-log
session, a process-action row, or a discretized tabular instance.
The only input assumption is that each object can be represented as a
finite transaction over a discrete alphabet.

\subsection{Transactional Input Assumption}

Each object $x_i$ is converted into a transaction
\begin{equation}
    T_i \subseteq \A,
\end{equation}
where $\A=\{a_1,\ldots,a_m\}$ is an alphabet of atoms.
The transactional database is
\begin{equation}
    \T=\{T_1,\ldots,T_n\}.
\end{equation}
The representation covers the following common cases:
\begin{itemize}
    \item \textbf{Binary event vectors}: active columns become atoms.
    \item \textbf{One-hot categorical records}: active attribute--value
    indicators become atoms.
    \item \textbf{Numerical tabular records}: numerical attributes are
    discretized into bins, and bin identifiers become atoms.
    \item \textbf{Log sessions}: event templates, template counts, or
    template transitions become atoms.
\end{itemize}
Labels $y_i\in\{0,1\}$, where $1$ denotes anomaly, are used only for
evaluation.
Label columns and attack-category columns are excluded from $T_i$.

The support of an itemset $p\subseteq\A$ is
\begin{equation}
    \supp(p)=\frac{1}{n}\sum_{i=1}^{n}\mathbf{1}[p\subseteq T_i].
    \label{eq:support}
\end{equation}

Table~\ref{tab:notation} summarizes the main notation used throughout
the paper.

\begin{table*}[t]
\centering
\caption{Key notation used in the RareSense formulation. Minimal rare
itemsets are used only as intermediate seeds for rule generation,
whereas the retained rare rules define the final similarity
coordinates.}
\label{tab:notation}
\small
\begin{tabular}{ll}
\toprule
Symbol & Meaning \\
\midrule
$\A=\{a_1,\ldots,a_m\}$
& Alphabet of behavioral atoms \\

$T_i\subseteq\A$
& Transaction associated with object $x_i$ \\

$\T=\{T_1,\ldots,T_n\}$
& Transactional database \\

$\supp(p)$
& Empirical support of itemset $p\subseteq\A$ \\

$\mathcal{I}_{\tau}$
& Minimal rare itemsets mined under support ceiling $\tau$ \\

$r:b\Rightarrow h$
& Association rule with antecedent $b$ and consequent $h$ \\

$E(r)=b\cup h$
& Complete evidence set associated with rule $r$ \\

$\Rset=\{r_1,\ldots,r_M\}$
& Retained positive-weight rare-rule dictionary \\

$\phi_r(x_i)$
& Indicator that object $x_i$ activates rule coordinate $r$ \\

$P_i\subseteq\Rset$
& Rare-rule profile of object $x_i$ \\

$s(r)$
& Empirical stability of rule $r$ across resamples \\

$w(r)$
& Composite weight of retained rule $r$ \\

$S_R(x_i,x_j)$
& Rare-rule weighted-Jaccard similarity \\

$D_R(x_i,x_j)$
& Induced distance $1-S_R(x_i,x_j)$ \\

$A_R(x_i)$
& Total weighted rare-rule evidence score \\

\bottomrule
\end{tabular}
\end{table*}

\subsection{Primary task: query-by-anomaly retrieval}

Let $\X=\{x_1,\ldots,x_n\}$ denote the collection of candidate objects
represented as transactions. Given a confirmed anomaly query $q$, the search task is to return the subset of $k$
objects most similar to $q$:
\begin{equation}
    \TopK(q)=
    \argmax_{\mathcal K\subseteq X\setminus\{q\},\,|\mathcal K|=k}
    \sum_{x_j\in\mathcal K} S_R(q,x_j),
    \label{eq:topk}
\end{equation}
where $S_R$ is the rarity-aware similarity function defined in
Section~\ref{sec:similarity}. With several confirmed
queries $Q^+$, a candidate can be ranked by
\begin{equation}
    R(x_j\mid Q^+)=\max_{q\in Q^+}S_R(q,x_j).
    \label{eq:qba}
\end{equation}

This evaluates whether the similarity model retrieves objects that
are anomalous in a similar way to known seeds.

Our primary benchmark uses binary relevance: any labeled anomaly is relevant.
Consequently, the primary task evaluates the operational anomaly-search question ``does an anomaly query
bring other anomalies to the top of the list?''.

\subsection{Secondary task: global anomaly ranking}

The secondary task ranks the objects in $\X$ according to a
query-independent anomaly score $A(x_i)$. RareSense instantiates this
score as the total weighted rare-pattern evidence activated by an
object, as formally defined in Section~\ref{sec:global_score}. This task
evaluates whether objects containing larger amounts of strongly weighted
rare evidence tend to be anomalous.

The global-ranking task is complementary to query-conditioned retrieval.
The former asks how much rare evidence an individual object contains,
whereas the latter asks whether two objects activate the same rare
evidence. We therefore evaluate global ranking separately using AUROC.

\section{Rarity-Aware Similarity}
\label{sec:similarity}

This section defines the core contribution, as summarized in
Figure~\ref{fig:raresense_framework}.
Classical similarities compare objects through shared observed atoms.
\method{} compares objects through shared rare explanatory patterns.
\begin{figure*}[ht!]
    \centering
    \includegraphics[width=0.55\linewidth]{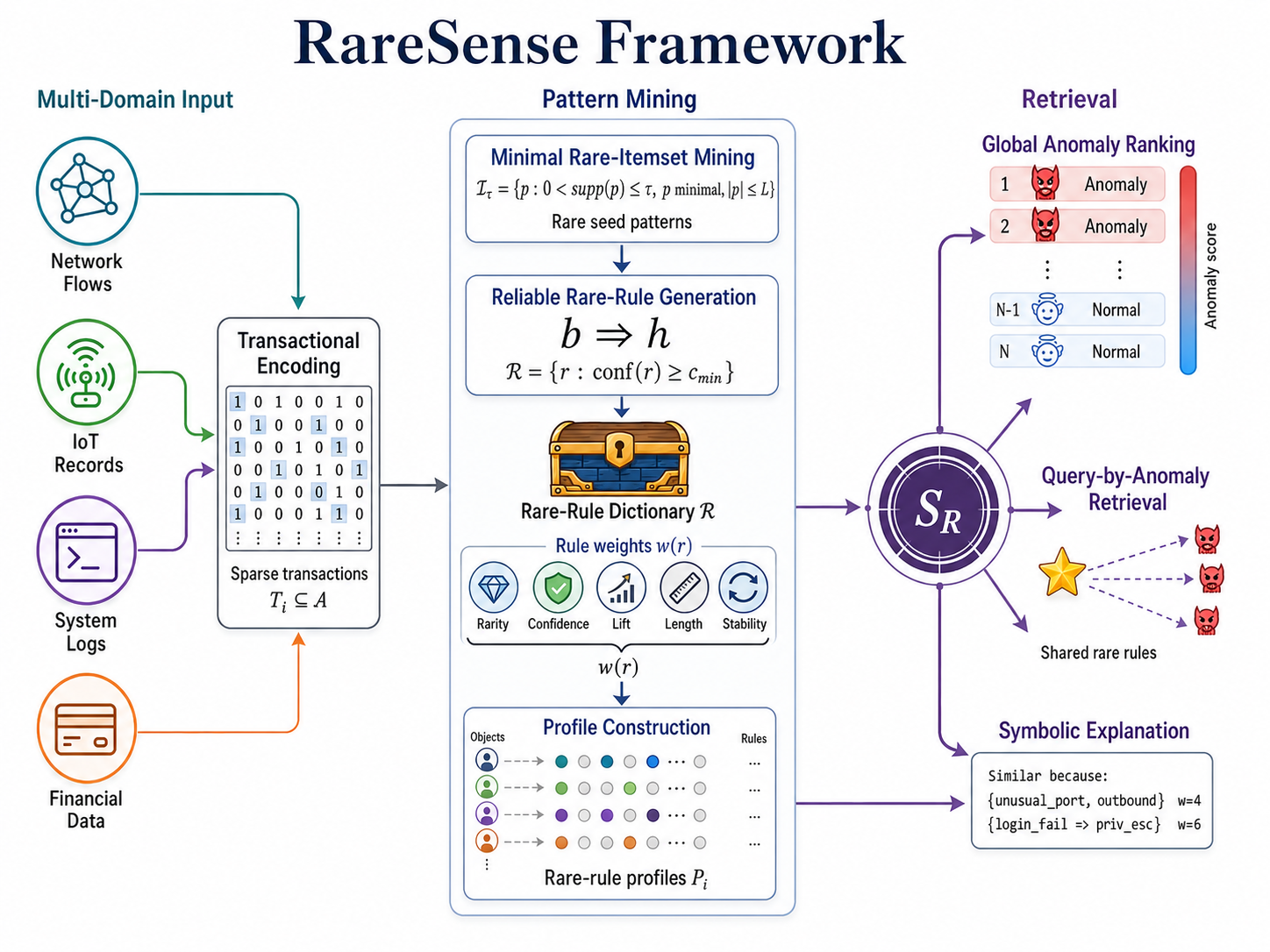}
    \caption{Overview of the \method{} framework.
    Input objects such as network flows, IoT traffic records, or log
    sessions are represented as sparse transactions.
    Rare itemsets are mined under support and length constraints as an
    intermediate substrate; reliable rare association rules are generated
    from them and weighted by rarity, confidence, lift, length, and
    stability to build rare-rule profiles.
    \method{} then computes rarity-aware similarity over shared
    rare-rule evidence to support global anomaly ranking,
    query-by-anomaly retrieval, and symbolic explanation through shared
    retained rules.}
    \label{fig:raresense_framework}
\end{figure*}

\paragraph{Motivation.}
IDF-weighted Jaccard~\cite{salton1988term} is a strong baseline that
corrects uniform matching by downweighting frequent atoms.
However, it remains atom-wise: individually common atoms may form a
rare and meaningful conjunction that IDF cannot detect, while
individually rare atoms may be unrelated to the anomaly mechanism.
\method{} therefore \emph{extends IDF weighting from atoms to patterns}.

\subsection{Rare Behavioral Rule Dictionary}

A non-empty itemset $p$ is \emph{rare} under threshold $\tau$ when
$0<\supp(p)\le\tau$.  We mine \emph{minimal rare itemsets} (MRIs):
\begin{equation}
\begin{aligned}
\mathcal I_\tau=\{p\subseteq\A:\;&
0<\supp(p)\le\tau,\\
&\supp(q)>\tau\;\;\forall q\subsetneq p,\quad
|p|\le L
\}.
\end{aligned}
\label{eq:mri}
\end{equation}
Minimality removes redundant rare supersets whose rarity is already explained
by a smaller rare subset and makes the mining stage more compact. The parameter ($L$) denotes the maximum itemset cardinality, preventing the enumeration of overly long and highly specific patterns. These rare itemset candidates are intermediate structures; they are not retained directly as similarity coordinates.

Each $p\in\mathcal I_\tau$ seeds candidate association rules
$r:b\Rightarrow h$. For partition rules, $b\cup h=p$,
$b\cap h=\emptyset$, and both sides are non-empty. In the reported
implementation we generate both non-trivial partition rules and exact closure
rules derived from the closure of $p$. Consequently, a closure-rule evidence
set can be longer than the MRI seed even though MRI enumeration is bounded by
$L$. For every rule,
\begin{equation}
\begin{aligned}
\supp(r) &= \supp(b\cup h), \qquad
\conf(r) = \frac{\supp(b\cup h)}{\supp(b)},\\
\lift(r) &= \frac{\supp(b\cup h)}
{\supp(b)\supp(h)}.
\end{aligned}
\label{eq:rule_stats}
\end{equation}
Rules with $\conf(r)\ge c_{\min}$ are retained.  Lift is used as a dependence
quality in the weight rather than as a hard requirement, avoiding brittle
thresholding when support is extremely small.

Let $\Rset=\{r_1,\ldots,r_M\}$ be the retained \emph{RareSense dictionary} and define the
\emph{evidence set}
\begin{equation}
    E(r)=b\cup h.
\end{equation}
Importantly, a rule coordinate is active only when its \emph{complete evidence
set} is present:
\begin{equation}
    \phi_r(x_i)=\mathbf{1}[E(r)\subseteq T_i].
    \label{eq:activation}
\end{equation}
This is intentionally different from antecedent-only rule firing.  The rule is
used as a reliability-qualified conjunction; confidence and lift determine the
quality of the conjunction, while $E(r)$ determines activation.

The rare-rule profile of an object $x_i$ is therefore
\begin{equation}
    P_i=\{r\in\Rset:E(r)\subseteq T_i\}.
    \label{eq:profile}
\end{equation}

\subsection{From Transactional Feature Space to RareSense Space}
\label{sec:raresense_space}
Let $\X=\{x_1,\ldots,x_n\}$ denote the same set of original objects in the
dataset and let $\Rset=\{r_1,\ldots,r_M\}$ be the retained rare-rule
dictionary. RareSense maps each object from the original object space to
an $M$-dimensional binary rare-rule activation space:
\begin{equation}
\begin{aligned}
\Phi_R &: \X \rightarrow \{0,1\}^{M},\\
\Phi_R(x_i) &=
\bigl(\phi_{r_1}(x_i),\ldots,\phi_{r_M}(x_i)\bigr).
\end{aligned}
\label{eq:mapping}
\end{equation}
Each rule receives a non-negative weight
\begin{equation}
\begin{aligned}
w(r)={}&
\left[\log\frac{n+\eta}{n_r+\eta}\right]^{\alpha}
\times[\conf(r)]^{\beta}\\
&\times
[1+\log(\max\{1,\lift(r)\})]^{\gamma}\\
&\times
[1+\log(1+|E(r)|)]^{\delta}
\times[s(r)]^{\zeta}.
\end{aligned}
\label{eq:weight}
\end{equation}
where $n_r$ is the number of objects activating rule $r$, $\eta>0$ is a
smoothing constant, and $s(r)$ denotes the stability of the rule across
resamples. The exponents $\alpha$, $\beta$, $\gamma$, $\delta$, and $\zeta$
control the contributions of rarity, confidence, lift, rule complexity, and
stability, respectively. The stability term
\begin{equation}
    s(r)=\frac{1}{B}\sum_{b=1}^{B}\mathbf{1}[r\in\mathcal{R}^{(b)}]
    \label{eq:stability}
\end{equation}
measures the empirical rediscovery rate across subsamples. In the
reported experiments, we use $B=10$ subsamples and set $\zeta=1$.
Consequently, stability enters the weight linearly: a rule rediscovered
in $k$ of the ten subsamples has $s(r)=k/10$, and its weight is
multiplied by this empirical rediscovery rate. Rules with zero empirical stability have zero composite weight and
are omitted from the profiles; hence all retained coordinates satisfy
$w(r)>0$.

A weighted coordinate embedding is
\begin{equation}
\Psi_R(x_i)=
\bigl(\sqrt{w(r_1)}\phi_{r_1}(x_i),\ldots,
      \sqrt{w(r_M)}\phi_{r_M}(x_i)\bigr).
\label{eq:weighted_embedding}
\end{equation}
The square root is convenient because two active copies of the same coordinate
contribute exactly $w(r)$ to their inner product.

Figure~\ref{fig:raresense_space} illustrates the RareSense space transformation.
\begin{figure*}
    \centering
    \includegraphics[width=0.7\linewidth]{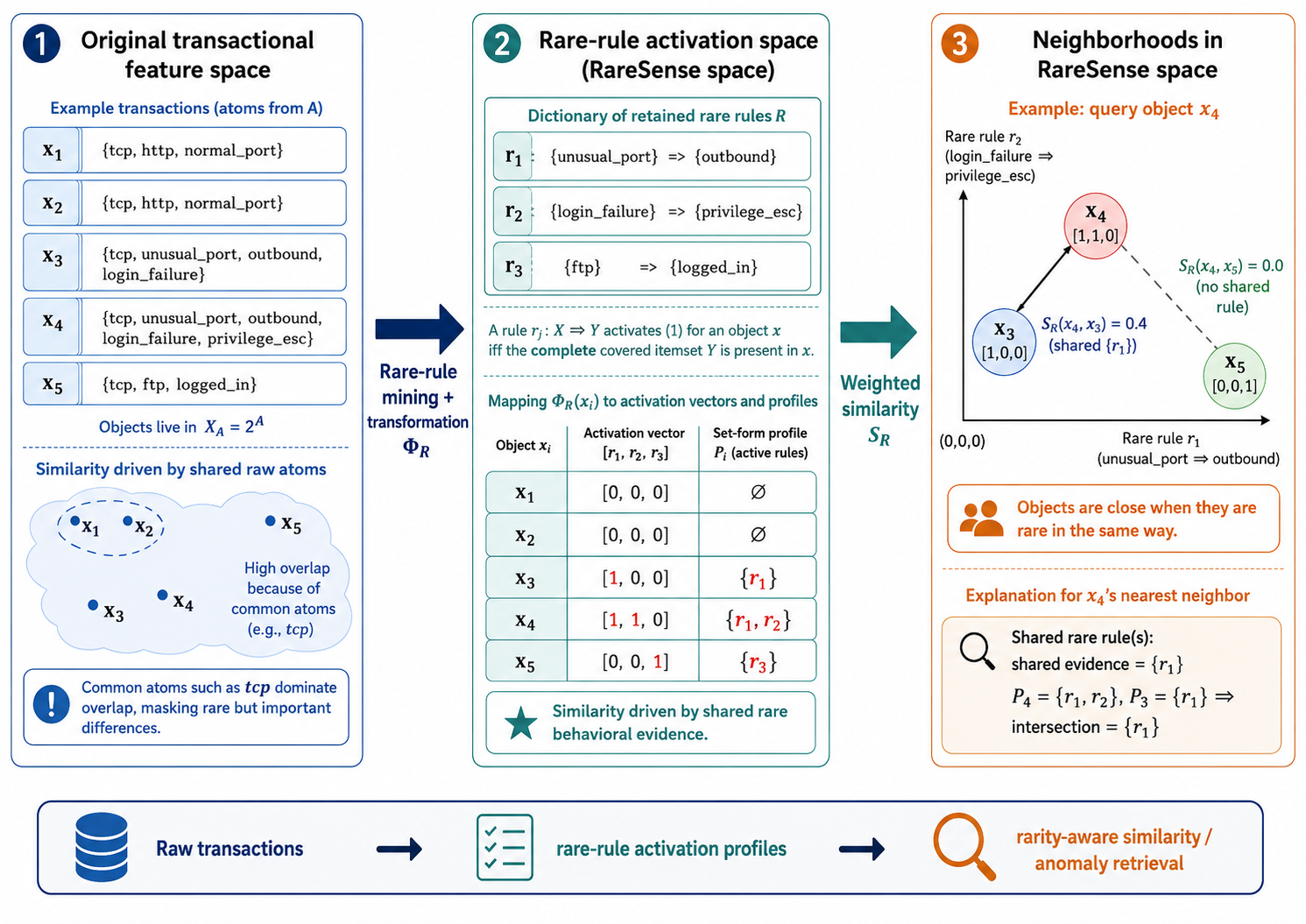}
    \caption{RareSense space transformation.
    Objects are mapped from the original transactional feature space,
    where similarity is driven by shared raw atoms, to a rare-rule
    activation space through $\Phi_R$.
    In the transformed space, each coordinate corresponds to a retained
    rare rule activated by the object, and weighted similarity compares
    objects according to shared rare behavioral evidence.}
    \label{fig:raresense_space}
\end{figure*}

\subsection{Rare-rule weighted Similarity}

For two profiles,
\begin{equation}
S_R(x_i,x_j)=
\begin{cases}
\dfrac{\sum_{r\in P_i\cap P_j}w(r)}
      {\sum_{r\in P_i\cup P_j}w(r)},
& P_i\cup P_j\neq\emptyset,\\[8pt]
1,&P_i=P_j=\emptyset.
\end{cases}
\label{eq:raresense_similarity}
\end{equation}
The numerator is shared rare evidence; the denominator is all rare evidence
activated by either object. The value $S_R(\emptyset,\emptyset)=1$ is required
for the metric statement on profile equivalence classes. Operational retrieval
uses a separate no-evidence policy: a query with $P_q=\emptyset$ is not treated
as evidence of semantic similarity merely because another object also has an
empty profile. In the benchmark, such queries are retained rather than removed;
all candidates receive zero evidence-based similarity and ties are resolved by
a fixed deterministic candidate order. This prevents optimistic filtering of
hard queries while keeping the theoretical similarity definition intact.

\subsection{Running Example: Rare Scoring vs.\ Rare Similarity}
\label{sec:running_example}

We illustrate the distinction with five network connections:
\begin{table}[h]
\centering
\caption{Illustrative transactional objects and their observed atoms.}
\label{tab:toy_example}
\small
\setlength{\tabcolsep}{4pt}
\begin{tabularx}{\columnwidth}{l>{\raggedright\arraybackslash}X}
\toprule
Object & Observed atoms\\
\midrule
$x_1$ &
\texttt{tcp}, \texttt{http}, \texttt{normal\_port}\\

$x_2$ &
\texttt{tcp}, \texttt{http}, \texttt{normal\_port}\\

$x_3$ &
\texttt{tcp}, \texttt{unusual\_port}, \texttt{outbound},
\texttt{login\_failure}\\

$x_4$ &
\texttt{tcp}, \texttt{unusual\_port}, \texttt{outbound},
\texttt{login\_failure}, \texttt{privilege\_esc}\\

$x_5$ &
\texttt{tcp}, \texttt{ftp}, \texttt{logged\_in}\\
\bottomrule
\end{tabularx}
\end{table}
Suppose the retained rare rules are
$r_1:\{\texttt{unusual\_port}\}\Rightarrow\{\texttt{outbound}\}$
with $w(r_1)=4$,
$r_2:\{\texttt{login\_failure}\}\Rightarrow\{\texttt{privilege\_esc}\}$
with $w(r_2)=6$, and
$r_3:\{\texttt{ftp}\}\Rightarrow\{\texttt{logged\_in}\}$ with $w(r_3)=2$.
Then $P_3=\{r_1\}$, $P_4=\{r_1,r_2\}$, and $P_5=\{r_3\}$:
\[
S_R(x_4,x_3)=\frac{4}{4+6}=0.4,\qquad
S_R(x_4,x_5)=0.
\]
Thus $x_5$ contains rare evidence but not the same rare evidence as $x_4$.
This is the distinction between pointwise rarity and pairwise rare similarity.

\subsection{Formal properties}
\label{sec:properties}
\begin{proposition}[IDF-Jaccard as a singleton-coordinate special case]
Consider the RareSense profile-similarity construction with a
coordinate dictionary consisting only of singleton evidence
coordinates
\[
\mathcal R_{\mathrm{sing}}
=
\{r_a : E(r_a)=\{a\},\ a\in\mathcal A\},
\]
and assign
\[
w(r_a)=\log\frac{n+\eta}{n_a+\eta}.
\]
These formal singleton coordinates are introduced only for the
reduction and need not be generated by the rare-rule mining procedure.
Then the RareSense similarity reduces exactly to IDF-weighted Jaccard.
\end{proposition}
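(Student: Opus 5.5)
The argument is a direct unfolding of definitions. We substitute the singleton dictionary into the profile definition \eqref{eq:profile} and the similarity \eqref{eq:raresense_similarity}, and compare the result with the standard IDF-weighted Jaccard, taken here as
\[
J_{\mathrm{IDF}}(T_i,T_j)=\frac{\sum_{a\in T_i\cap T_j}\mathrm{idf}(a)}{\sum_{a\in T_i\cup T_j}\mathrm{idf}(a)},\qquad \mathrm{idf}(a)=\log\frac{n+\eta}{n_a+\eta},
\]
where $n_a=|\{i: a\in T_i\}|$. The empty-union case is set to $1$, matching the convention used in \eqref{eq:raresense_similarity}.

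\textbf{Step 1 (profiles are transactions).} Since $E(r_a)=\{a\}$, the activation \eqref{eq:activation} gives $\phi_{r_a}(x_i)=\mathbf 1[a\in T_i]$. Hence $P_i=\{r_a: a\in T_i\}$. The map $a\mapsto r_a$ is a bijection from $\A$ onto $\mathcal R_{\mathrm{sing}}$, because distinct atoms yield distinct evidence sets. It therefore commutes with intersection and union, so $P_i\cap P_j$ is the image of $T_i\cap T_j$ and $P_i\cup P_j$ is the image of $T_i\cup T_j$.

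\textbf{Step 2 (weights match IDF).} We note that $n_{r_a}$, the number of objects activating $r_a$, equals $n_a$. The prescribed weight is therefore exactly $\mathrm{idf}(a)$. Equivalently, it is \eqref{eq:weight} with $\alpha=1$ and $\beta=\gamma=\delta=\zeta=0$, though the proposition assigns it directly, so stability and confidence play no role.

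\textbf{Step 3 (reindex the sums).} Substituting into \eqref{eq:raresense_similarity} and reindexing the numerator and denominator sums via the bijection gives $S_R(x_i,x_j)=J_{\mathrm{IDF}}(T_i,T_j)$ whenever $T_i\cup T_j\neq\emptyset$. If $T_i\cup T_j=\emptyset$, both expressions equal $1$ by convention.

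The only delicate point, which I expect to be the main obstacle, is a degenerate case. An atom with $n_a=n$ has weight $\log 1=0$. If every atom in $T_i\cup T_j$ has zero weight, the denominator vanishes even though the union is nonempty, so both formulas are undefined. I would handle this by adopting the same convention on both sides: either drop zero-weight coordinates, consistent with the paper's remark that retained coordinates satisfy $w(r)>0$, or declare the ratio $1$ in both. Since the two sides are then identical as functions of the same weighted sets, the equality holds verbatim. I would also remark that the nonnegativity of the weights, which holds because $n_a\le n$, is all that is needed for the construction to remain a valid weighted Jaccard.
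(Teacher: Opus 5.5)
Your proof is correct and follows the route the paper intends. The paper states this proposition without a written proof; the implicit argument is your substitution of $E(r_a)=\{a\}$ into Eqs.~\eqref{eq:activation}--\eqref{eq:profile} and \eqref{eq:raresense_similarity}, followed by reindexing through the bijection $a\mapsto r_a$.

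Your observation about atoms with $n_a=n$ is a genuine refinement that the statement glosses over. Such atoms get weight zero, so the denominator vanishes exactly when $T_i=T_j$ consist only of atoms present in every transaction. Resolving this with the same convention on both sides, e.g.\ dropping zero-weight coordinates as the paper does for retained rules, is the right fix.
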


\begin{proposition}[Weighted-Jaccard pseudometric]
\label{prop:pseudometric}
Let the rare-rule dictionary $\mathcal R$ be fixed and let
$w(r)>0$ for every $r\in\mathcal R$.
Then
\[
D_R(x_i,x_j)=1-S_R(x_i,x_j)
\]
is a pseudometric on the original object collection $X$.
Moreover, it is a metric on the quotient space $X/\!\sim$, where
\[
x_i\sim x_j
\quad\Longleftrightarrow\quad
P_i=P_j.
\]
\end{proposition}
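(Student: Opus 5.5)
The plan is to reduce everything to the weighted Jaccard (Tanimoto/Soergel) distance on finite weighted sets. Fix the dictionary $\mathcal R$ with $w(r)>0$ and define the finite measure $\mu(A)=\sum_{r\in A}w(r)$ on subsets $A\subseteq\mathcal R$. Then $S_R(x_i,x_j)=\mu(P_i\cap P_j)/\mu(P_i\cup P_j)$ when $P_i\cup P_j\neq\emptyset$, and $S_R=1$ otherwise. Hence
\[
D_R(x_i,x_j)=d(P_i,P_j),\qquad d(A,B)=\frac{\mu(A\triangle B)}{\mu(A\cup B)},\quad d(\emptyset,\emptyset)=0.
\]
Because $D_R$ depends on objects only through their profiles, it suffices to show that $d$ is a metric on the power set $2^{\mathcal R}$. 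Both claims then follow: the pseudometric property on $X$ comes from pulling back $d$ along $x_i\mapsto P_i$, and the metric property on $X/\!\sim$ holds because $\sim$ is exactly the kernel of this map, so $D_R$ is well defined on classes and separates distinct classes.

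Nonnegativity, symmetry and $d(A,A)=0$ are immediate. For the identity of indiscernibles, note that if $A\neq B$ then $A\triangle B$ contains some $r$. Since $w(r)>0$, this gives $\mu(A\triangle B)>0$, and therefore $d(A,B)>0$. This is where strict positivity of the weights is used. With zero-weight coordinates, two distinct profiles could be at distance zero.

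The main obstacle is the triangle inequality $d(A,C)\le d(A,B)+d(B,C)$. I would first dispose of the degenerate cases in which some of the sets are empty. If $B=\emptyset$ and $A,C$ are not both empty, the right-hand side is $1+1$ or at least $1$, while the left-hand side is at most $1$, so the inequality holds. If $A=\emptyset$, then $d(A,C)\in\{0,1\}$ and $d(A,B)=1$ unless $B=\emptyset$, which reduces to the previous case. For the general case I would use the standard proof of the Jaccard/Soergel metric. Write $\mu$ as a sum over atoms $r$, so each set is a 0/1 vector weighted by $w$, and decompose $A\cup B\cup C$ into its seven Venn regions with weights $a_1,\dots,a_7\ge 0$. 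Each distance is then a ratio of linear forms in these weights.

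Clearing denominators would leave a polynomial inequality in nonnegative variables, but I would avoid that route. Instead I would use the known argument: for $U=A\cup B\cup C$,
\[
d(A,B)\ge\frac{\mu(A\triangle B)}{\mu(U)}\quad\text{only after correcting for the missing region,}
\]
via the lemma $\frac{x}{y}\le\frac{x+z}{y+z}$ for $0\le x\le y$ and $z\ge0$. Concretely, I would show $d(A,C)\le\frac{\mu(A\triangle C)+\mu(B\setminus(A\cup C))}{\mu(U)}$. Then I would bound this numerator by $\mu(A\triangle B)+\mu(B\triangle C)$ minus the corresponding inflation terms, following Lipkus' or Späth's proof of the Steinhaus metric. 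Alternatively, I would simply invoke the classical result that the Steinhaus transform $\frac{2\delta(A,C)}{\delta(A,O)+\delta(C,O)+\delta(A,C)}$ of the metric $\delta=\mu(\cdot\triangle\cdot)$ with base point $O=\emptyset$ is a metric. Since $\delta(A,\emptyset)=\mu(A)$ and $\mu(A)+\mu(C)+\mu(A\triangle C)=2\mu(A\cup C)$, this transform equals $d$ exactly, and the triangle inequality follows.
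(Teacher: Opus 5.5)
Your proposal is correct and takes essentially the paper's route: you write $D_R$ as the weighted Jaccard distance $\mu(P_i\triangle P_j)/\mu(P_i\cup P_j)$ on profiles, take the triangle inequality from the classical Marczewski--Steinhaus result, and pass to $X/\!\sim$; beyond the paper, you also make explicit that $w(r)>0$ is what gives identity of indiscernibles on classes, and you check the empty-profile cases, both of which the paper leaves implicit. Your direct sketch also closes without any ``correction'': the bound $d(A,B)\ge\mu(A\triangle B)/\mu(U)$ holds outright, and $\mu(A\triangle C)+\mu(B\setminus(A\cup C))\le\mu(A\triangle B)+\mu(B\triangle C)$ holds with slack exactly $\mu(B\setminus(A\cup C))+2\mu((A\cap C)\setminus B)$.
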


\begin{proof}
Let
\[
W(A)=\sum_{r\in A}w(r).
\]
For $P_i\cup P_j\neq\emptyset$,
\[
D_R(x_i,x_j)
=
1-
\frac{W(P_i\cap P_j)}
     {W(P_i\cup P_j)}
=
\frac{W(P_i\triangle P_j)}
     {W(P_i\cup P_j)},
\]
which is the weighted Jaccard distance.

Non-negativity and symmetry follow immediately from the corresponding
properties of set intersection and union. The triangle inequality follows
from the metric property of weighted Jaccard distance for fixed non-negative
coordinate weights
~\cite{marczewski1958certain,levandowsky1971distance,lipkus1999tanimoto}.

If $P_i=P_j=\emptyset$, Eq.~\eqref{eq:raresense_similarity} defines
$S_R(x_i,x_j)=1$, and therefore $D_R(x_i,x_j)=0$, consistently extending
the distance to the empty profile.

For every object, $D_R(x_i,x_i)=0$. However, two distinct original objects
may activate exactly the same rare-rule profile, so
\[
D_R(x_i,x_j)=0
\]
does not necessarily imply $x_i=x_j$. Hence $D_R$ is a pseudometric on
$X$. Since $D_R$ depends only on the corresponding rare-rule profiles,
it is well defined on equivalence classes induced by
$x_i\sim x_j\Leftrightarrow P_i=P_j$. On this quotient space,
identity of indiscernibles is restored, and $D_R$ is therefore a metric
on $X/\!\sim$.
\end{proof}
\begin{proposition}[Higher-order tie breaking]
\label{prop:tiebreak}
There exist object pairs that are tied by an atom-level weighted Jaccard
similarity but strictly separated by RareSense.
\end{proposition}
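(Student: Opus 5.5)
The claim is existential, so I will prove it with an explicit small construction: a query $q$ and two candidates $x,y$ with equal atom-level weighted Jaccard similarity to $q$ but different RareSense similarity. The idea is that atom-level weighted Jaccard only sees the weights of $T_q\cap T_x$ and $T_q\cup T_x$. So I can arrange that $x$ and $y$ share the same number of equally weighted atoms with $q$, but only $x$ shares a rare \emph{conjunction} with $q$. To keep the atom-level tie easy to verify, I will use a symmetric design in which the relevant atoms have identical marginal supports, so any atom weighting that depends only on frequency (uniform Jaccard or IDF-Jaccard, via the singleton reduction of the first proposition above) gives equal weights.

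\textbf{Construction.} Take atoms $a,b,c,d$ and a query $T_q=\{a,b,c,d\}$. Let $T_x=\{a,b\}$ and $T_y=\{a,c\}$. Fill the rest of the database so that $a,b,c,d$ have equal marginal supports, $\{a,b\}$ is a minimal rare itemset, and $\{a,c\}$ is frequent. For example, include many transactions containing $\{a,c\}$ and $\{b,d\}$, plus a few containing $\{a,b\}$ (including $T_q$ and $T_x$), chosen so that:
\begin{itemize}
\item each singleton and $\{a,c\}$ has support above $\tau$;
\item $\{a,b\}$ has positive support at most $\tau$;
\item the counts of $a,b,c,d$ coincide.
\end{itemize}
Then $|T_q\cap T_x|=|T_q\cap T_y|=2$ and $|T_q\cup T_x|=|T_q\cup T_y|=4$, all with equal atom weights. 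Hence the atom-level weighted Jaccard similarity is $2w_0/4w_0=1/2$ in both cases, a tie.

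\textbf{RareSense side.} Since the proposition only requires existence, I can work with a fixed dictionary, as the pseudometric proposition does. Take $\mathcal R=\{r\}$ with $r:\{a\}\Rightarrow\{b\}$, seeded by the minimal rare itemset $\{a,b\}$, with $E(r)=\{a,b\}$ and $w(r)>0$. Then:
\begin{itemize}
\item $P_q=\{r\}$, because $\{a,b\}\subseteq T_q$;
\item $P_x=\{r\}$;
\item $P_y=\emptyset$, because $b\notin T_y$.
\end{itemize}
By Eq.~\eqref{eq:raresense_similarity}, $S_R(q,x)=1$ and $S_R(q,y)=0$, which is a strict separation.

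\textbf{Main obstacle.} The delicate step is showing that $r$ is actually a retained rule with $w(r)>0$ under the mining pipeline: its confidence must be at least $c_{\min}$, and its stability must be nonzero. I would first try to make the proof independent of the pipeline by phrasing the statement for any fixed dictionary containing a positive-weight coordinate whose evidence set is contained in $T_q\cap T_x$ but not in $T_y$. If that reading is not acceptable, I would tune the counts so that $\conf(r)=\supp(\{a,b\})/\supp(\{a\})$ clears $c_{\min}$. This may require lowering $\supp(a)$ while still keeping it above $\tau$, so $\tau$ and $c_{\min}$ would be chosen as parameters of the example. I would also choose the subsample scheme so that $\{a,b\}$ is rediscovered at least once.
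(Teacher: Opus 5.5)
Your argument is correct and is essentially the paper's proof. The paper takes four equally weighted atoms with $q=\{a,b,c\}$, $x^+=\{a,b,d\}$, $x^-=\{a,c,d\}$, which has the same intersection/union pattern as your $\{a,b,c,d\}$, $\{a,b\}$, $\{a,c\}$ and hence the same tie at $1/2$; it then uses a single active coordinate $r^*$ with $E(r^*)=\{a,b\}$ to get $S_R=1$ versus $S_R=0$. It simply posits that this coordinate is retained and that no other retained coordinate is active on the three objects, which is exactly your fixed-dictionary reading, so the pipeline-retention step you flag as the main obstacle is not needed. If you did pursue your count-tuning fallback, note that many $a$-transactions lacking $b$ would break $\conf(\{a\}\Rightarrow\{b\})\ge c_{\min}$.
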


\begin{proof}
Let all four atoms $a,b,c,d$ have equal positive atom weight and define
$q=\{a,b,c\}$, $x^+=\{a,b,d\}$, and $x^-=\{a,c,d\}$.
Both candidates share two of four union atoms with $q$, hence atom-level
weighted Jaccard gives the same value $1/2$.
Now suppose the retained rare-rule dictionary contains a coordinate $r^*$ with
$E(r^*)=\{a,b\}$ and no other retained coordinate is active in these three
objects.  Then
$P_q=P_{x^+}=\{r^*\}$ and $P_{x^-}=\emptyset$, so
$S_R(q,x^+)=1$ and $S_R(q,x^-)=0$.
Thus a higher-order coordinate can resolve an atom-level tie.  The proposition
is existential; it does not claim that RareSense universally dominates atomic
similarity.
\end{proof}

\begin{proposition}[Exact explanation decomposition]
\label{prop:explanation}
For $P_i\cup P_j\neq\emptyset$, define the contribution of a shared rule
$r\in P_i\cap P_j$ by
\begin{equation}
c_r(x_i,x_j)=
\frac{w(r)}{\sum_{u\in P_i\cup P_j}w(u)}.
\label{eq:rule_contrib}
\end{equation}
Then
\begin{equation}
S_R(x_i,x_j)=\sum_{r\in P_i\cap P_j}c_r(x_i,x_j).
\end{equation}
\end{proposition}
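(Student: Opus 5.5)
The claim follows directly from the definition of $S_R$ in Eq.~\eqref{eq:raresense_similarity}. Since we assume $P_i\cup P_j\neq\emptyset$, we are in the first case of that definition, so $S_R(x_i,x_j)=W(P_i\cap P_j)/W(P_i\cup P_j)$, using the shorthand $W(A)=\sum_{r\in A}w(r)$ from the proof of Proposition~\ref{prop:pseudometric}.

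The first thing to check is that the denominator is well defined and positive. Every retained coordinate satisfies $w(r)>0$, because zero-stability rules are omitted from the dictionary. Together with $P_i\cup P_j\neq\emptyset$, this gives $W(P_i\cup P_j)>0$. Hence each $c_r(x_i,x_j)$ in Eq.~\eqref{eq:rule_contrib} is a well-defined nonnegative number.

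The main step is linearity of the finite sum. The denominator $W(P_i\cup P_j)$ does not depend on the summation index $r$, so it factors out of the sum:
\[
\sum_{r\in P_i\cap P_j} c_r(x_i,x_j)=\frac{1}{W(P_i\cup P_j)}\sum_{r\in P_i\cap P_j} w(r)=\frac{W(P_i\cap P_j)}{W(P_i\cup P_j)}=S_R(x_i,x_j).
\]
The sum is finite because $\mathcal R$ is a finite dictionary of size $M$.

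Two edge remarks complete the argument. If $P_i\cap P_j=\emptyset$, both sides equal $0$: the left side is an empty sum, and the right side has numerator $W(\emptyset)=0$. The excluded case $P_i=P_j=\emptyset$, where $S_R=1$ by convention, is exactly where the contributions would be undefined, which is why the hypothesis is required.

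There is no real obstacle here. The only point needing care is the positivity of the normalizer, which rests on the positive-weight assumption.
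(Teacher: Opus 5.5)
Your proof is correct and follows the paper's own argument, which simply distributes the common denominator $W(P_i\cup P_j)$ over the shared-rule numerator; your positivity check and the $P_i\cap P_j=\emptyset$ remark are harmless additions. One minor correction to your closing remark: when $P_i=P_j=\emptyset$, the problem is not that the contributions are undefined, since there are none. The problem is that the empty sum equals $0$ while $S_R=1$ by convention, so the identity would fail.
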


\begin{proof}
Immediate by distributing the denominator of
Eq.~\eqref{eq:raresense_similarity} over the shared-rule numerator.
\end{proof}

\subsection{Compatibility with metric and inverted indexing}
\label{sec:metric_indexing}

Proposition~\ref{prop:pseudometric} makes exact metric indexing possible on the
\emph{profile space}.  VP-trees~\cite{yianilos1993vptree} and
M-trees~\cite{ciaccia1997mtree} can therefore use $D_R$ for exact pruning, with
objects sharing the same profile occupying zero-distance equivalence classes.
This compatibility is a property of the similarity geometry; the present
experiments use direct evaluation rather than a specialized metric index.

Rare-rule profiles are also naturally sparse.  An inverted index can store, for
each rule $r$, the posting list of objects that activate it.  For a query $q$,
the union of postings for $r\in P_q$ forms an exact candidate set for all objects
with non-zero RareSense similarity.  Candidates outside that set have zero
shared rare evidence and need not be scored unless zero-similarity ties must be
materialized.  This filtering is independent of the triangle inequality and can
be combined with a metric index.

For approximate search, weighted MinHash~\cite{ioffe2010improved} provides
sketches for weighted Jaccard, while LSH-style bucketing
~\cite{broder1997resemblance,indyk1998ann} can be used to reduce candidate
generation further.  These mechanisms suggest a two-stage deployment:
inverted or sketch-based filtering followed by exact RareSense re-ranking and
rule-level explanation.

\subsection{Explanation of retrieved neighbors}
\label{sec:explain}

Because the similarity is additive over shared rules, explanations are exact
rather than post hoc.  We define
\begin{equation}
\operatorname{Explain}_m(x_i,x_j)=
\operatorname{Top}_m
\left\{
(r,c_r(x_i,x_j)):
r\in P_i\cap P_j
\right\},
\label{eq:explain}
\end{equation}
where $c_r$ is given by Eq.~\eqref{eq:rule_contrib}.  Each explanation reports
the highest-contributing shared rules together with their normalized
contributions.

The fidelity of a truncated $m$-rule explanation is
\begin{equation}
F_m(x_i,x_j)=
\frac{\sum_{r\in\operatorname{Explain}_m(x_i,x_j)}w(r)}
     {\sum_{r\in P_i\cap P_j}w(r)},
\label{eq:explain_fidelity}
\end{equation}
with $F_m=1$ when the displayed rules account for all shared evidence.  In the
running example, the pair $(x_4,x_3)$ has only one shared rule, so its
one-rule explanation has fidelity $1$ and contributes the full similarity
$4/(4+6)=0.4$.

\subsection{Secondary global anomaly score}
\label{sec:global_score}

For global ranking we use the total weighted rare evidence
\begin{equation}
A_R(x_i)=\sum_{r\in P_i}w(r).
\label{eq:anomaly_score}
\end{equation}
This score answers a different question from $S_R$: $A_R$ asks how much strong
rare evidence an object contains, whereas $S_R$ asks whether two objects share
the same evidence.  The reported AUROC results use Eq.~\eqref{eq:anomaly_score};
no neighborhood-isolation term is included in the main experiments.

\subsection{Adaptive rarity-scale selection}
\label{sec:auto_config}

The support ceiling controls a fundamental rarity--coverage trade-off. Very
small $\tau$ values retain highly selective evidence but can make rule profiles
too sparse for query-conditioned retrieval; broader ceilings increase profile
overlap but admit less exceptional patterns. Unless explicitly labeled
\methodfixed{}, \method{} therefore denotes the adaptive configuration below.

First, a preliminary model is fitted using
$\tau_0=0.01$, $c_{\min}=0.95$, and $L_0=4$. For each object we compute a
lightweight preliminary rarity statistic
\begin{equation}
    s_i^{(0)}=\sum_{r\in P_i^{(0)}}\bigl(1-\supp(r)\bigr),
    \label{eq:prelim_rarity}
\end{equation}
where $P_i^{(0)}$ is the preliminary rule profile. We use the empirical
upper-tail mass as an unlabeled proxy for the amount of atypical structure:
\begin{equation}
  \hat\rho=
  \operatorname{clip}\!\left(
    \frac{\left|\left\{i:s_i^{(0)}>\bar{s}^{(0)}+2\sigma_s^{(0)}\right\}\right|}{n},
    \;0.001,\;0.10
  \right),
  \label{eq:rho_hat}
\end{equation}
where $\bar{s}^{(0)}$ and $\sigma_s^{(0)}$ are the mean and standard
deviation of the preliminary statistics. The quantity $\hat\rho$ is used as
an operational proxy rather than claimed to be an unbiased estimate of the
true anomaly prevalence.

The final support ceiling is
\begin{equation}
  \tau=\operatorname{clip}(\kappa\hat\rho,\;0.001,\;0.10),
  \label{eq:tau_auto}
\end{equation}

with a fixed global coverage multiplier $\kappa=5$. To avoid confusion
with the weight exponent $\alpha$ in Eq.~\eqref{eq:weight}, we denote
this multiplier by $\kappa$. It is a heuristic calibration constant used
unchanged for every workload and was not optimized using class labels,
validation queries, or a held-out validation set. The workload-specific choice of $\tau$ therefore depends only on the
unlabeled statistic $\hat{\rho}$. The multiplier broadens the rarity
ceiling relative to the estimated atypical tail, allowing patterns shared
by several anomalous objects to remain eligible rather than forcing every
retained pattern to be as rare as the estimated anomaly fraction itself.

The maximum MRI length is adapted as
\begin{equation}
L=
\begin{cases}
5, & \hat\rho\ge 0.05,\\
4, & 0.01\le\hat\rho<0.05,\\
3, & \hat\rho<0.01.
\end{cases}
\label{eq:L_auto}
\end{equation}
The confidence threshold remains fixed at $c_{\min}=0.95$. The adaptive
procedure therefore changes only the rarity scale and maximum MRI length. It
requires one preliminary mining pass followed by the final fit; we treat this
as explicit model-selection overhead rather than assuming it is negligible.

\section{Algorithms}
\label{sec:algorithms}

The complete \method{} workflow separates representation construction,
adaptive rarity-scale selection, query-time retrieval, and global anomaly
ranking. Algorithms~\ref{alg:build}--\ref{alg:global} make this separation
explicit. The first two procedures are executed during model construction:
Algorithm~\ref{alg:auto} determines the workload-specific rarity scale, while
Algorithm~\ref{alg:build} transforms the original transactional representation
into the rare-rule profile space using the resulting parameters. Once the rule
dictionary and object profiles have been constructed, query-by-anomaly retrieval
and global anomaly ranking operate directly on the resulting sparse profiles
without repeating the mining stage. This separation is operationally useful
because the comparatively expensive pattern-mining stage is performed offline,
whereas subsequent similarity queries reuse the precomputed dictionary, weights,
profiles, and posting lists.

\begin{algorithm}
\caption{Construct the RareSense rule space}
\label{alg:build}
\begin{algorithmic}[1]
\Require Objects $X$, transaction mapper $\varphi$, rarity threshold $\tau$,
maximum MRI length $L$, confidence threshold $c_{\min}$
\Ensure Rule dictionary $\Rset$, weights $w$, profiles $\{P_i\}_{i=1}^n$
\ForAll{$x_i\in X$}
    \State $T_i\gets\varphi(x_i)$
\EndFor
\State Mine minimal rare itemsets $\mathcal I_\tau$ satisfying
Eq.~\eqref{eq:mri}
\State Generate partition rules and exact closure rules from
$\mathcal I_\tau$
\State $\Rset\gets\{r:\conf(r)\ge c_{\min}\}$
\ForAll{$r\in\Rset$}
    \State Compute $w(r)$ using Eq.~\eqref{eq:weight}
\EndFor
\ForAll{$x_i\in X$}
    \State $P_i\gets\{r\in\Rset:E(r)\subseteq T_i\}$
\EndFor
\State \Return $\Rset,w,\{P_i\}$
\end{algorithmic}
\end{algorithm}

Algorithm~\ref{alg:build} constructs the common representation used by
all subsequent \method{} operations. Objects are first mapped to
transactions, after which minimal rare itemsets are mined as intermediate
seeds for partition and exact closure rules. Rules satisfying
$\conf(r)\ge c_{\min}$ form the final dictionary, receive the weights in
Eq.~\eqref{eq:weight}, and define each object's sparse profile through
complete-evidence activation. The resulting dictionary, weights, and
profiles are reused for both query-conditioned retrieval and global
anomaly ranking.

\begin{algorithm}
\caption{Adaptive rarity-scale selection for \method{}}
\label{alg:auto}
\begin{algorithmic}[1]
\Require Transaction database $\T=\{T_1,\ldots,T_n\}$,
global coverage multiplier $\kappa$
\Ensure Adaptive parameters $(\tau,L,c_{\min})$
\State Fit preliminary rule space with
$\tau_0=0.01$, $L_0=4$, $c_{\min}=0.95$
\State $s_i^{(0)}\gets
\sum_{r\in P_i^{(0)}}(1-\supp(r))$ for all $i$
\State $\hat\rho\gets\operatorname{clip}\!\left(
\frac{|\{i:s_i^{(0)}>\bar{s}^{(0)}+2\sigma_s^{(0)}\}|}{n},
0.001,0.10\right)$
\State $\tau\gets\operatorname{clip}(\kappa\hat\rho,0.001,0.10)$
\State $L\gets5$ if $\hat\rho\ge0.05$,
else $4$ if $\hat\rho\ge0.01$, else $3$
\State $c_{\min}\gets0.95$
\State \Return $(\tau,L,c_{\min})$
\end{algorithmic}
\end{algorithm}

Algorithm~\ref{alg:auto} addresses the fact that a single fixed rarity
threshold does not generate equally useful profiles across workloads with
different levels of sparsity and atypical structure. The procedure begins with
a preliminary \method{} fit using fixed reference parameters. From the
resulting preliminary profiles, each object receives a lightweight rarity
statistic $s_i^{(0)}$ based on the supports of the rules it activates. The
fraction of objects whose statistic exceeds
$\bar{s}^{(0)}+2\sigma_s^{(0)}$ defines $\hat\rho$, an unlabeled proxy for the
amount of unusually rare structure present in the workload.

The proxy $\hat\rho$ is then translated into the two parameters that most
directly control the richness of the rare-rule space. The support ceiling is
set according to
$\tau=\operatorname{clip}(\kappa\hat\rho,0.001,0.10)$, where $\kappa$ is a
global coverage multiplier, while the maximum MRI length $L$ is increased as
the estimated atypical mass grows. The intuition is that an excessively strict
support ceiling may produce highly selective but nearly disjoint profiles,
which is undesirable for retrieval because related anomalies must activate at
least some common evidence. A broader rarity scale increases the opportunity
for meaningful profile overlap, while the upper clipping bound prevents the
dictionary from drifting too far toward common patterns.

The confidence threshold remains fixed at $c_{\min}=0.95$, so the adaptive
procedure changes only the rarity scale and maximum MRI length. It requires two
passes: a preliminary rule-space construction used to estimate $\hat\rho$,
followed by the final model built with the adapted $(\tau,L)$. The global multiplier is fixed heuristically to $\kappa=5$ and used
unchanged for every workload. Only $\hat\rho$, and hence the
workload-specific values of $\tau$ and $L$, varies across workloads.
No class labels are used in this adaptation.

\begin{algorithm}
\caption{Exact rarity-aware top-$k$ retrieval}
\label{alg:query}
\begin{algorithmic}[1]
\Require Query $q$, precomputed profiles $\{P_i\}$, weights $w$, $k$
\Ensure Top-$k$ neighbors with explanations
\If{$P_q=\emptyset$}
    \State Assign zero evidence-based similarity to every candidate
    \State \Return deterministic zero-similarity ranking (benchmark policy)
\EndIf
\State $C(q)\gets\bigcup_{r\in P_q}\operatorname{Posting}(r)$
\ForAll{$x_j\in C(q)\setminus\{q\}$}
    \State Compute $S_R(q,x_j)$
\EndFor
\State Assign similarity $0$ to candidates outside $C(q)$
\State Keep the $k$ largest similarities,
resolving zero-score ties deterministically
\ForAll{returned $x_j$ with $S_R(q,x_j)>0$}
    \State Compute $\operatorname{Explain}_m(q,x_j)$
\EndFor
\State \Return ranked neighbors and explanations
\end{algorithmic}
\end{algorithm}

Algorithm~\ref{alg:query} implements the primary query-conditioned retrieval
task. Because rare-rule profiles are sparse, an inverted index associates each
rule $r$ with the objects that activate it. For a query $q$, only objects
appearing in at least one posting list associated with $P_q$ can obtain a
positive \method{} similarity. Consequently,
\[
C(q)=\bigcup_{r\in P_q}\operatorname{Posting}(r)
\]
forms an exact candidate set rather than an approximation: any object outside
$C(q)$ shares no retained rule with the query and therefore necessarily has
similarity zero.

Exact weighted-Jaccard similarity is computed only for candidates in $C(q)$,
after which the $k$ highest-scoring objects are returned. This separates
candidate generation from exact re-ranking and avoids unnecessary comparisons
against objects that cannot share rare evidence with the query. Because the
same shared rules that determine the similarity are retained after ranking,
the explanation of each positive-similarity neighbor requires no surrogate
model. The largest rule-level contributions defined in
Eq.~\eqref{eq:rule_contrib} directly explain why that candidate was retrieved.

The empty-profile case requires a separate operational policy. If
$P_q=\emptyset$, the query contains no retained rare evidence from which a
rule-mediated relation to another object can be established. Although two empty
profiles are mathematically assigned similarity one in
Eq.~\eqref{eq:raresense_similarity} to preserve the metric construction,
interpreting empty profiles as meaningful anomaly neighbors would be
operationally misleading. The benchmark therefore assigns zero
evidence-based similarity to all candidates and resolves the resulting ties
deterministically. These queries remain part of the evaluation rather than
being discarded, avoiding optimistic performance estimates caused by excluding
difficult queries.

\begin{algorithm}
\caption{Global anomaly ranking}
\label{alg:global}
\begin{algorithmic}[1]
\Require Profiles $\{P_i\}$ and weights $w$
\Ensure Ranked anomaly list
\ForAll{$x_i\in X$}
    \State $A_R(x_i)\gets\sum_{r\in P_i}w(r)$
\EndFor
\State \Return objects sorted by decreasing $A_R(x_i)$
\end{algorithmic}
\end{algorithm}

Algorithm~\ref{alg:global} uses the same rare-rule representation for the
secondary global anomaly-ranking task. Unlike Algorithm~\ref{alg:query}, which
compares one profile against another and answers the question
``which objects share rare evidence with this query?'', the global score
aggregates the total weighted rare evidence activated by each object.
Objects activating more strongly weighted rare rules therefore receive larger
$A_R(x_i)$ values and are ranked as more anomalous.

This distinction is central to the dual use of the \method{} representation.
The pairwise similarity $S_R$ is query-conditioned and changes according to the
rare evidence present in the query, whereas $A_R$ is a query-independent scalar
score. The two algorithms therefore answer complementary questions using the
same underlying rule space: Algorithm~\ref{alg:query} supports investigation
after a suspicious or confirmed anomaly has been selected, while
Algorithm~\ref{alg:global} supports initial prioritization of suspicious objects
before a query is available. This distinction also motivates the separate
evaluation protocols used later: nDCG@10 for query-conditioned retrieval and
AUROC for global anomaly ranking.

Taken together, the four algorithms define a two-stage operational workflow.
Algorithm~\ref{alg:auto} first selects the workload-specific rarity scale, after
which Algorithm~\ref{alg:build} constructs the final dictionary and sparse
profiles. The resulting representation can then support repeated retrieval
queries through Algorithm~\ref{alg:query}, while Algorithm~\ref{alg:global}
provides a complementary global ranking from the same profiles. Thus the
symbolic mining stage is amortized across subsequent searches, and both
retrieval and explanation reuse the same precomputed rare-rule evidence rather
than requiring separate models.
\section{Computational Analysis}
\label{sec:complexity}

Let $\bar t$ be the mean transaction length, $M_I$ the number of mined MRIs,
$M=|\Rset|$ the number of retained rule coordinates, $\bar e$ the mean rule
evidence-set size, and $\bar P$ the mean profile size.  Mining rare itemsets is
output- and data-dependent and is exponential in the worst case, as is frequent
itemset mining.  With a maximum MRI size $L$, a level-wise implementation that
evaluates candidate family $C_\ell$ has candidate-evaluation cost
\[
O\!\left(
n\sum_{\ell=1}^{L}|C_\ell|\ell
\right).
\]
Rule generation from an MRI has at most $2^{|p|}-2$ non-trivial partitions, so
partition-rule generation costs $O(M_I2^LL)$ before filtering.  Closure-rule
construction adds the cost of computing closures for the mined MRIs.

Once the dictionary is fixed, profile construction is
$O(nM\bar e)$ naively and can be reduced substantially with bitsets or inverted
postings.  A single weighted-Jaccard comparison is
$O(|P_q|+|P_j|)$ for sorted sparse profiles.  A full exact scan costs
$O(n\bar P+n\log k)$, whereas inverted filtering replaces $n$ by the candidate
count $|C(q)|$ plus the cost of reading the posting lists.  Global scoring with
Eq.~\eqref{eq:anomaly_score} is linear in the number of active
object--rule incidences, $O(n\bar P)$.

\begin{table*} [h]
\centering
\caption{Complexity of the main post-mining operations.  Mining itself is
candidate-dependent and worst-case exponential in the atom vocabulary.}
\label{tab:complexity}
\small
\begin{tabular}{lll}
\toprule
Operation & Time & Space\\
\midrule
Transaction construction & $O(n\bar t)$ & $O(n\bar t)$\\
Profile construction (naive) & $O(nM\bar e)$ & $O(n\bar P)$\\
Single similarity & $O(|P_q|+|P_j|)$ & $O(1)$ extra\\
Exact full-scan top-$k$ & $O(n\bar P+n\log k)$ & $O(k)$\\
Inverted candidate retrieval & $O(\sum_{r\in P_q}|\mathrm{Posting}(r)|+
|C(q)|\bar P)$ & postings\\
Global anomaly score & $O(n\bar P)$ & $O(n)$ scores\\
\bottomrule
\end{tabular}
\end{table*}

\section{Experimental Evaluation}
\label{sec:experiments}

\subsection{Datasets}

We evaluate \method{} on 27 anomaly-detection workloads drawn from
four benchmark families. The evaluation combines network-security,
provenance-based, and general categorical benchmarks spanning text,
web, scientific, marketing, image-derived, and bioassay data.
Table~\ref{tab:datasets} reports the processed workload statistics.

\paragraph{UWF-ZeekData24 (14 datasets).}
UWF-ZeekData24\footnote{https://datasets.uwf.edu/}~\cite{elam2025uwf} is a recent and realistic network-flow benchmark
collected from a controlled cyber-range environment.
Each record is a Zeek-parsed connection log entry represented as a
binary vector over 42--43 behavioral atoms (protocol, service, duration
bins, port-range indicators).
Two representation variants are provided:
\emph{Variant~1} mixes benign and attack flows into a single
transactional database across seven ATT\&CK tactics (Credential
Access, Defense Evasion, Exfiltration, Initial Access, Persistence,
Privilege Escalation, Reconnaissance), while \emph{Variant~2} adds
host-side Windows event atoms to the Zeek features.
This yields fourteen datasets of 16{,}530--91{,}400 objects with anomaly
ratios ranging from 0.05\% to 47.59\%.

\paragraph{NSL-KDD (2 datasets).}
We use the two most challenging subsets of the NSL-KDD intrusion
detection benchmark\footnote{https://www.kaggle.com/datasets/hassan06/nslkdd}~\cite{Tavallaee09,Mishra24}: \textsc{Probe} (port scans;
6.43\%; $n=64{,}759$) and \textsc{U2R} (user-to-root privilege
escalation; 0.37\%; $n=60{,}821$).
Categorical connection features are converted to attribute--value atoms;
numerical features are discretized into five quantile bins.

\paragraph{DARPA Transparent Computing (4 datasets).}
We evaluate on four provenance-graph datasets from the DARPA TC \footnote{https://gitlab.com/adaptdata}
engagement~\cite{darpatc,BENABDERRAHMANE2026114877}:  Android Clearscope ($n=102$; 8.82\%),
Windows 5-dir Events ($n=17{,}569$; 0.05\%), Linux Trace
($n=272{,}376$; 0.01\%), and BSD Cadets ($n=76{,}903$; 0.02\%).
Each object is a process node; atoms are event-type identifiers in the
process's execution trace.

\paragraph{General categorical benchmarks (7 datasets).}
To evaluate whether the proposed similarity representation generalizes
beyond cybersecurity, we include seven categorical anomaly-detection
workloads from ADRepository\footnote{https://www.dbs.ifi.lmu.de/research/outlier-evaluation/DAMI/}~\cite{Campos16,Pang16}.
\textsc{Reuters-Corn} represents documents through binary lexical
indicators; \textsc{W7A} is a sparse web-classification workload;
\textsc{Solar Flare} contains categorical solar-activity observations;
\textsc{Bank Marketing} contains nominalized direct-marketing records;
\textsc{APascal} uses image-derived semantic attributes;
\textsc{AID362} contains nominalized molecular-bioassay descriptors;
and \textsc{Internet Ads} is a high-dimensional web-advertisement
workload. Several of these datasets are imbalanced classification
problems converted into binary anomaly-detection tasks. Accordingly,
their minority classes are treated as anomalies for evaluation rather
than interpreted as naturally occurring anomalies in every domain.

\begin{table} 
\centering
\caption{Processed workload statistics.  AR denotes anomaly ratio.}
\label{tab:datasets}
\scriptsize
\begin{tabular}{llrr}
\toprule
Family & Workload & $n$ & AR (\%) \\
\midrule
\multirow{7}{*}{UWF-V1}
& Credential Access     & 91{,}400 & 47.59 \\
& Defense Evasion       & 48{,}232 & 0.68  \\
& Exfiltration          & 47{,}929 & 0.05  \\
& Initial Access        & 48{,}467 & 1.16  \\
& Persistence           & 48{,}232 & 0.68  \\
& Privilege Escalation  & 48{,}232 & 0.68  \\
& Reconnaissance        & 50{,}815 & 5.72  \\
\midrule
\multirow{7}{*}{UWF-V2}
& Credential Access & 16,762 & 3.3 \\
& Defense Evasion & 16,530 & 2.0 \\
& Exfiltration & 16,554 & 2.2 \\
& Initial Access & 16,762 & 3.3 \\
& Persistence & 16,530 & 2.0 \\
& Privilege Escalation & 16,530 & 2.0 \\
& Reconnaissance & 18,128 & 10.6 \\
\midrule
\multirow{2}{*}{NSL-KDD}
& Probe & 64,759 & 6.4 \\
& U2R & 60,821 & 0.4 \\
\midrule
\multirow{4}{*}{DARPA TC}
& Android & 102 & 8.8 \\
& Windows & 17,569 & 0.05 \\
& Linux & 272,376 & 0.01 \\
& BSD & 76,903 & 0.02 \\
\midrule
\multirow{4}{*}{Categorical}
& Reuters-Corn & 12898  &  1.83 \\
&  W7A & 49750  & 2.97  \\
&  Solar Flare & 1067  & 4.02  \\
&  Bank Marketing &  41189 & 11.26  \\
&  APascal &   12696&  1.38 \\
&  AID362 & 4280  & 1.38  \\
&  Internet Ads &  3280 & 14  \\
\bottomrule
\end{tabular}
\end{table}

\subsection{Protocol and configurations}
\label{sec:protocol}

\paragraph{Transductive evaluation protocol.}
The benchmark evaluation is \emph{transductive}: for each workload, the
rare-rule dictionary is constructed from the complete unlabeled collection
that is subsequently searched. No class labels are used during dictionary
construction, support estimation, rule mining, profile generation, or
parameter adaptation. The protocol therefore does not introduce label
leakage, although the learned representation reflects the empirical
distribution of the evaluated collection. In a prospective deployment,
the dictionary could instead be learned from a historical reference window
and then frozen, incrementally maintained, or periodically refreshed. We
discuss this limitation and its deployment implications in
Section~\ref{sec:discussion}.

\paragraph{Model configurations.}
A fixed instantiation of \method{} involves nine numerical parameters: the
upper rarity threshold $\tau$, the maximum minimal-rare-itemset length $L$,
the minimum rule-confidence threshold $c_{\min}$, the five weighting exponents
$\alpha,\beta,\gamma,\delta,\zeta$, and the smoothing constant $\eta$.
Stability estimation additionally uses $B=10$ subsamples. Rule weights
are computed according to Eq.~\eqref{eq:weight}. Unless explicitly varied, the
shared confidence and weighting parameters are fixed to
\begin{equation}
\begin{aligned}
c_{\min}&=0.95,\qquad
\alpha=\beta=\gamma=\zeta=1,\\
\delta&=0.5,\qquad
\eta=1,\qquad B=10.
\end{aligned}
\label{eq:weight_defaults}
\end{equation}

Unless explicitly labeled otherwise, \textbf{\method{}} denotes the adaptive
configuration defined in Section~\ref{sec:auto_config} and
Algorithm~\ref{alg:auto}. It first constructs a preliminary rule space using
\begin{equation}
\tau_0=0.01,\qquad L_0=4,\qquad c_{\min}=0.95,
\label{eq:preliminary_defaults}
\end{equation}
and then adapts the final values of $\tau$ and $L$ from the unlabeled
rarity-score distribution of the workload. The global coverage multiplier is fixed heuristically to $\kappa=5$ and
used unchanged for every reported workload. It was not selected using
class labels, validation queries, or a held-out validation set. The
workload-specific adaptation of $\tau$ and $L$ therefore depends only on
the unlabeled rarity-score distribution.

We retain \textbf{\methodfixed{}} as a fixed-parameter reference:
\begin{equation}
\tau=0.01,\qquad L=4,\qquad c_{\min}=0.95.
\label{eq:fixed_defaults}
\end{equation}
This reference configuration is used only in explicitly identified secondary
analyses. The main results do not use post-hoc oracle selection between the
adaptive and fixed configurations.

\paragraph{Query-conditioned retrieval protocol.}
For each workload, up to 200 labeled anomalies are sampled as queries using a
fixed random seed. The same query set and candidate collection are used for
every query-conditioned similarity method. Query labels are used only to define
the evaluation queries and relevance judgments; they are not used to mine the
RareSense dictionary, estimate rule weights, or adapt $\tau$ and $L$. The query
object itself is removed from its candidate set.

All methods are evaluated using identical binary relevance labels, under which
a candidate is relevant when it is labeled anomalous. Empty RareSense query
profiles are retained rather than excluded. Such queries contain no retained
rare evidence and therefore induce an all-zero evidence-based ranking under
the benchmark policy. Zero-score ties are resolved deterministically and
consistently across workloads. Retaining these queries avoids optimistic
nDCG@10 estimates that could arise from filtering out difficult or uncovered
queries.

\paragraph{Sensitivity and reproducibility.}
The sensitivity analyses in Section~\ref{sec:sensitivity} are conducted
on the UWF-V1 Privilege Escalation workload. The structural parameters
$\tau$ and $L$, together with the confidence and weighting parameters,
are varied one factor at a time while all remaining settings are held
fixed. These analyses characterize local parameter
behavior and are not used to select workload-specific test configurations.

RareSense and the atomic similarity baselines are deterministic once the data,
parameters, query set, and tie-breaking policy are fixed; they are therefore
reported without run-to-run standard deviations. Stochastic baselines use the fixed seeds and implementation settings specified
in the accompanying reproducibility material.
\subsection{Baselines}

\paragraph{Primary query-conditioned similarity baselines.}
The primary retrieval comparison uses four atom-level similarities evaluated
under exactly the same anomaly queries, candidate sets, binary relevance labels,
and nDCG@10 computation as \method{}:
\textbf{Jaccard}, \textbf{IDF-Jaccard},
\textbf{cosine}, and \textbf{TF--IDF cosine}.
Jaccard compares set overlap directly; cosine operates on the binary
incidence vectors. IDF-Jaccard assigns each atom an inverse-frequency weight
before computing weighted set overlap, while TF--IDF cosine computes cosine
similarity after the corresponding inverse-frequency reweighting. These are the
like-for-like baselines for the paper's central question: whether moving from
atomic coordinates to rare-rule coordinates improves query-conditioned anomaly
retrieval.

\paragraph{Secondary detector references.}
Classical anomaly detectors are AVF~\cite{koufakou2007avf},
FPOF~\cite{he2005fpof}, HBOS~\cite{goldstein2012hbos},
ECOD~\cite{li2022ecod}, COPOD~\cite{li2020copod}, and
OC-SVM~\cite{scholkopf2001svm}. Deep baselines are a feed-forward
AutoEncoder \cite{zhou2017anomaly}, DIF~\cite{xu2023dif}, and LUNAR~\cite{goodge2022lunar}.
These methods natively output a scalar anomaly score. They are evaluated as secondary references for the global
anomaly-ranking task using AUROC; they are not treated as
query-conditioned similarity baselines. This
is a useful operational reference---it asks whether ``show me the most
anomalous objects'' can substitute for query-conditioned search---but it is not
a like-for-like similarity comparison.

\paragraph{Dual evaluation protocol.}
\method{} supports both retrieval and global ranking from the same rare-rule
representation, but the two tasks require different comparisons. For retrieval,
the primary baselines are the four query-conditioned similarities above.
For global anomaly ranking, \method{} uses the scalar rare-evidence score
$A_R(x_i)$ and is compared with classical and deep anomaly detectors using
AUROC. Pairwise similarities do not natively define a query-independent scalar
anomaly score; constructing one would require an additional neighborhood-based
outlier model and would therefore constitute a different experimental task.
We consequently keep the retrieval and global-ranking comparisons separate.

\subsection{Metrics: nDCG first, AUROC second}

Our primary metric is normalized discounted cumulative gain at rank 10:
\begin{equation}
\mathrm{DCG}@k=\sum_{r=1}^{k}
\frac{2^{\mathrm{rel}_r}-1}{\log_2(r+1)},
\qquad
\mathrm{nDCG}@k=
\frac{\mathrm{DCG}@k}{\mathrm{IDCG}@k},
\label{eq:ndcg}
\end{equation}
where $\mathrm{rel}_r\in\{0,1\}$ in the primary binary-relevance evaluation.
nDCG rewards relevant anomalies more strongly when they occur near the top of
the list and is therefore aligned with limited analyst inspection budgets.
This is particularly important in Security Operations Centers (SOCs), where
analysts typically investigate only the highest-ranked alerts because examining
an entire ranking is costly and time-critical: by the time all events have been
reviewed, an attacker may already have completed data exfiltration or caused
substantial damage.

AUROC is reported as a secondary measure for the query-independent global
score in Eq.~\eqref{eq:anomaly_score}. AUROC measures pairwise
anomaly--normal separation over the entire ranking; it does not directly measure
whether relevant anomalies appear in the first few positions. We therefore do
not assign AUROC values to the native pairwise similarity baselines in the main
analysis: doing so would require adding a separate neighborhood anomaly-scoring
procedure and would conflate two different tasks.
\section{Results}
\label{sec:results}

\subsection{Primary result: top-ranked anomaly retrieval}

Tables~\ref{tab:main_retrieval} and \ref{tab:main_auroc} report the macro-average results over the 27
workloads. The central comparison is between \method{} and the four
query-conditioned atomic similarities, all evaluated with identical anomaly
queries and binary relevance.

\begin{table}[t]
\centering
\caption{Macro-average query-conditioned retrieval performance over
27 workloads.}
\label{tab:main_retrieval}
\small
\begin{tabular}{lc}
\toprule
Method & nDCG@10 $\uparrow$\\
\midrule
\method{} & \textbf{0.696}\\
TF--IDF cosine & \underline{0.645}\\
Cosine & 0.618\\
Jaccard & 0.603\\
IDF-Jaccard & 0.584\\
\bottomrule
\end{tabular}
\end{table}

\begin{table}[t]
\centering
\caption{Macro-average AUROC for global anomaly ranking over
27 workloads.}
\label{tab:main_auroc}
\small
\begin{tabular}{lc}
\toprule
Method & AUROC $\uparrow$\\
\midrule
\method{} & \textbf{0.850}\\
AutoEncoder & \underline{0.841}\\
HBOS & 0.838\\
AVF & 0.827\\
DIF & 0.819\\
\methodfixed{} & 0.812\\
FPOF & 0.769\\
OC-SVM & 0.735\\
ECOD & 0.722\\
COPOD & 0.717\\
LUNAR & 0.711\\
\bottomrule
\end{tabular}
\end{table}
Our primary interest is in whether truly anomalous objects are concentrated
near the top of the ranking, rather than merely achieving a good ordering over
the full candidate set. As illustrated in
Fig.~\ref{fig:ndcg_motivation}, this reflects many operational
anomaly-detection settings in which only a limited number of high-priority
cases can be inspected. For example, a Security Operations Center (SOC) may
investigate only the highest-ranked alerts before an attacker progresses further
in the kill chain; in medical screening, clinicians may prioritize the most
suspicious patients or physiological signals; and in large-scale IoT
monitoring, operators may need to inspect only a small subset of devices or
events among millions of observations. nDCG@10 is therefore well suited to this
objective because it rewards relevant anomalies more strongly when they appear
near the top of the ranking and progressively discounts lower-ranked results.

\begin{figure*}[h!]
    \centering
    \includegraphics[width=0.6\linewidth]{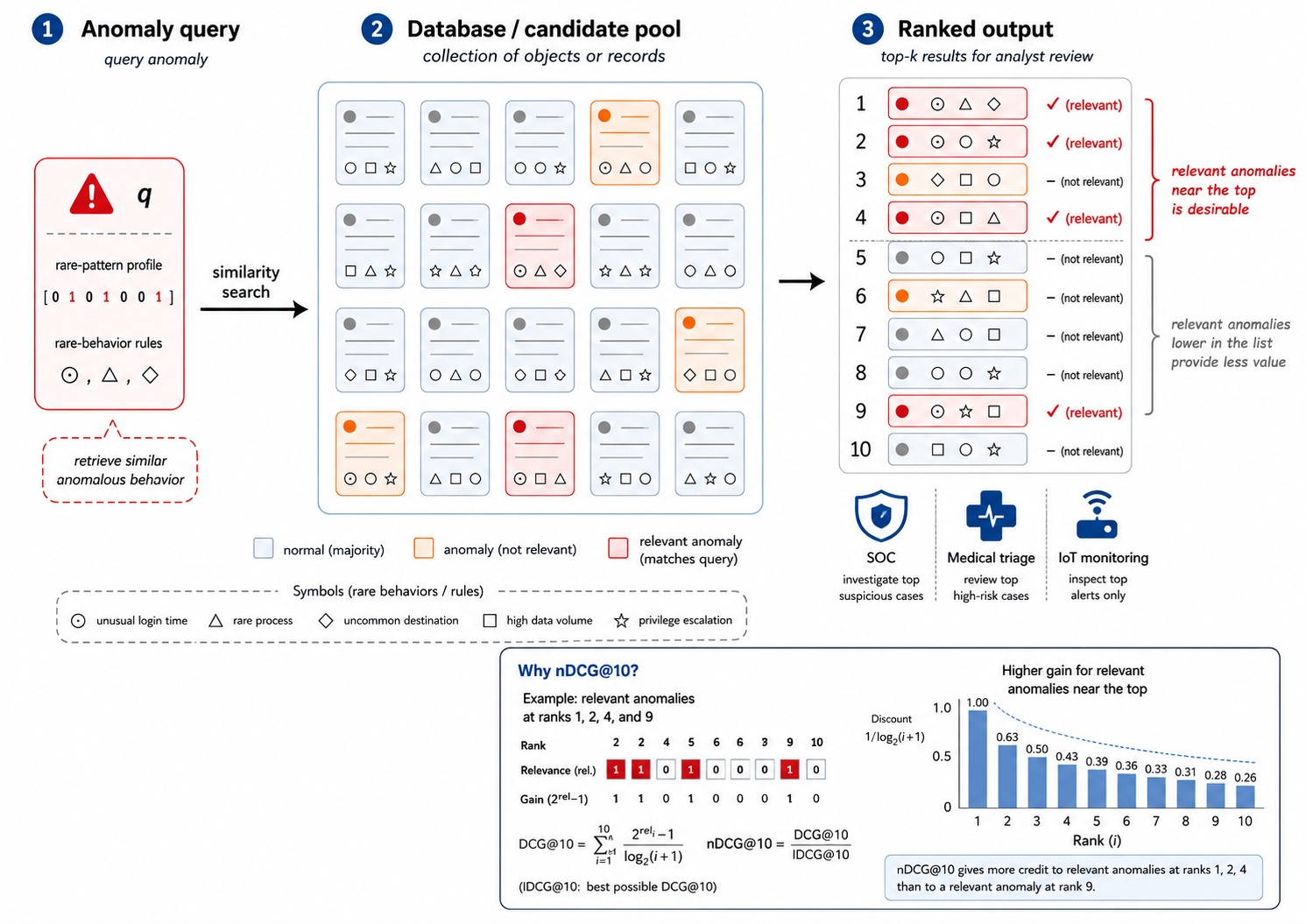}
\caption{Illustration of query-conditioned anomaly retrieval and the role of
top-ranked evaluation. An anomalous query is compared with a candidate
database using the learned similarity representation, producing a ranked list
of retrieved objects. Relevant anomalies appearing near the top of the ranking
are more valuable in operational settings such as Security Operations Centers
(SOCs), medical triage, and large-scale IoT monitoring, where only a limited
number of high-priority cases can typically be inspected. nDCG@10 captures this
requirement by assigning greater gain to relevant anomalies retrieved at higher
ranks and progressively discounting lower-ranked results.}
\label{fig:ndcg_motivation}
\end{figure*}

Across the 27 workloads, \method{} obtains the highest observed
macro-average nDCG@10, reaching approximately $0.696$ compared with
$0.645$ for TF--IDF cosine, the strongest atomic baseline on average.
This corresponds to an absolute improvement of approximately $0.051$
(5.1 percentage points), or about $7.9\%$ relative improvement.
\method{} also exceeds the strongest atomic similarity selected
independently for each workload on 18 of the 27 workloads.

For the secondary global anomaly-ranking task, \method{} obtains the
highest observed macro-average AUROC across the 27 workloads. However,
the corrected pairwise comparisons do not establish superiority over
several strong dedicated detectors. Thus, the main empirical contribution
remains top-ranked, query-conditioned retrieval rather than universal
dominance in global anomaly separation.

\subsection{Query-conditioned similarity by benchmark family}

Table~\ref{tab:similarity_family} separates the four benchmark families for
the like-for-like similarity comparison.

\begin{table*}[h]
\centering
\caption{Family-level nDCG@10 for the primary query-conditioned
similarity comparison. The Overall column is the macro-average over
all 27 workloads, whereas the Family-balanced column gives the four
benchmark families equal weight.}
\label{tab:similarity_family}
\small
\begin{tabular}{lrrrrrr}
\toprule
Method &
UWF (14) &
NSL-KDD (2) &
DARPA TC (4) &
General (7) &
Overall &
Family-balanced\\
\midrule
\method{}
& \textbf{0.904}
& \underline{0.826}
& \underline{0.262}
& \textbf{0.490}
& \textbf{0.696}
& \textbf{0.620}\\
Jaccard
& 0.792
& 0.672
& 0.179
& 0.449
& 0.603
& 0.523\\
IDF-Jaccard
& \underline{0.827}
& 0.397
& 0.228
& 0.353
& 0.584
& 0.451\\
Cosine
& 0.803
& \textbf{0.828}
& 0.167
& 0.444
& 0.618
& 0.561\\
TF--IDF cosine
& 0.808
& 0.820
& \textbf{0.278}
& \underline{0.476}
& \underline{0.645}
& \underline{0.596}\\
\bottomrule
\end{tabular}
\end{table*}

The family-level pattern is informative. On UWF, \method{} reaches
$0.904$, improving over the strongest atomic alternative,
IDF-Jaccard ($0.827$). This is the regime most aligned with the
method's hypothesis: anomalies share repeatable higher-order rare
conjunctions that are not fully captured by atom-wise overlap.

On NSL-KDD, \method{} ($0.826$) is essentially tied with cosine
($0.828$), indicating that the rare-rule transformation preserves
strong retrieval without providing a material advantage over the best
atomic geometry. On DARPA TC, TF--IDF cosine remains slightly stronger
($0.278$ versus $0.262$), consistent with the very small and
behaviorally heterogeneous attack populations in these provenance
workloads.

On the seven general categorical workloads, \method{} reaches a
macro-average nDCG@10 of approximately $0.490$, compared with $0.476$
for TF--IDF cosine, the strongest atomic baseline in this family.
\method{} achieves the best result on six of the seven workloads, with
AID362 providing the principal negative case. This extension indicates
that the benefit of rare-rule coordinates is not confined to
cybersecurity representations.

Because 14 of the 27 workloads come from UWF, we additionally compute a
family-balanced summary that gives UWF, NSL-KDD, DARPA TC, and the
general categorical family equal weight. \method{} remains strongest at
approximately $0.620$, compared with $0.596$ for TF--IDF cosine.
Therefore, the overall advantage is not solely attributable to the
larger number of UWF workloads.

\subsection{Per-workload results}

Tables~\ref{tab:q10_per_dataset} and~\ref{tab:auc_per_dataset} provide a
workload-level view of the two complementary capabilities evaluated in this
study. nDCG@10 measures whether relevant anomalies are concentrated near the
top of a \emph{query-conditioned} retrieval ranking, whereas AUROC evaluates
how well the query-independent score $A_R$ globally separates anomalous from
normal objects. The distinction is important: a method may rank anomalies
well globally without necessarily retrieving anomalies that share the same
rare evidence as a particular query, and conversely a highly informative
query may retrieve closely related anomalies even when the corresponding
global anomaly score is less discriminative.

\begin{table*}[t]
\centering
\caption{Per-workload nDCG@10 for the primary query-conditioned retrieval
comparison. All methods use the same anomaly queries, candidate sets, and
binary relevance labels.}
\label{tab:q10_per_dataset}
\scriptsize
\setlength{\tabcolsep}{4pt}

\begin{tabular}{llrrrrr}
\toprule
Family & Workload & RareSense & Jaccard & IDF-Jacc.  &
Cosine & TF--IDF Cos.\\
\midrule

\multirow{7}{*}{UWF-V1}
& Cred. Access   & \textbf{0.785} & 0.660 & \underline{0.770}  & 0.520 & 0.530\\
& Def. Evasion   & \textbf{0.803} & \underline{0.741} & 0.675  & \underline{0.741} & 0.668\\
& Exfiltration   & \textbf{0.833} & 0.623 & \underline{0.780}  & 0.651 & 0.771\\
& Initial Access & \textbf{0.976} & 0.904 & 0.893  & 0.904 & \underline{0.932}\\
& Persistence    & \textbf{0.978} & 0.963 & \underline{0.969}  & 0.963 & 0.910\\
& Priv. Esc.     & \textbf{0.973} & 0.674 & \underline{0.825}  & 0.774 & 0.771\\
& Recon.          & 0.799 & \textbf{0.832} & 0.802  & \textbf{0.832} & \underline{0.811}\\

\midrule

\multirow{7}{*}{UWF-V2}
& Cred. Access   & \textbf{0.978} & 0.361 & 0.441  & 0.401 & \underline{0.451}\\
& Def. Evasion   & \textbf{0.957} & 0.905 & \underline{0.906}  & 0.905 & \underline{0.906}\\
& Exfiltration   & 0.845 & \textbf{0.925} & 0.854  & \textbf{0.925} & \underline{0.896}\\
& Initial Access & \textbf{0.978} & \underline{0.921} & 0.920  & 0.919 & 0.912\\
& Persistence    & 0.957 & 0.961 & \textbf{0.971}  & 0.960 & \underline{0.966}\\
& Priv. Esc.     & 0.957 & 0.905 & \underline{0.958}  & 0.935 & \textbf{0.966}\\
& Recon.          & \textbf{0.840} & 0.713 & 0.819  & 0.818 & \underline{0.826}\\

\midrule

\multirow{2}{*}{NSL-KDD}
& Probe & \textbf{0.924} & 0.771 & 0.793  & \underline{0.884} & 0.875\\
& U2R   & 0.727 & 0.572 & 0.000  &
  \textbf{0.772} & \underline{0.765}\\

\midrule

\multirow{4}{*}{DARPA TC}
& Android & \textbf{0.665} & 0.154 & 0.247  & 0.062 & \underline{0.347}\\
& Windows & 0.128 & 0.271 & \underline{0.283} & 0.264 & \textbf{0.383}\\
& Linux   & 0.018 & 0.014  & \textbf{0.131} & \underline{0.066} &
  \textbf{0.131}\\
& BSD     & 0.235 & \textbf{0.277} & \underline{0.252}  & \textbf{0.277} & \underline{0.252}\\

\midrule

\multirow{7}{*}{General categorical}
& Reuters-Corn
& \textbf{0.640} & \underline{0.630} & 0.447 & 0.598 & 0.597\\
& W7A
& \textbf{0.764} & 0.620 & 0.431 & 0.623 & \underline{0.757}\\
& Solar Flare
& \textbf{0.199} & 0.165 & \underline{0.181} & 0.165 & 0.170\\
& Bank Marketing
& \textbf{0.261} & 0.156 & \underline{0.258} & 0.156 & 0.256\\
& APascal
& \textbf{0.671} & 0.654 & 0.411 & \underline{0.657} & 0.642\\
& AID362
& 0.113 & \textbf{0.137} & \underline{0.090} & \textbf{0.137}
& \textbf{0.137}\\
& Internet Ads
& \textbf{0.783} & \underline{0.781} & 0.650 & 0.769 & 0.775\\
\midrule
\multicolumn{2}{l}{\textbf{Macro-average}}
& \textbf{0.696} & 0.603 & 0.584 & 0.618 & \underline{0.645}\\
\midrule
\bottomrule
\end{tabular}
\end{table*}

Table~\ref{tab:q10_per_dataset} shows that the retrieval advantage of
\method{} is substantial on several workloads, but is not uniform.
Across the 27 workloads, RareSense achieves the highest
macro-average nDCG@10 of $0.696$, compared with $0.645$ for
TF--IDF cosine, the strongest atomic baseline on average.
It exceeds the strongest atomic similarity selected independently
for each workload on 18 of the 27 workloads.

The strongest improvements occur in regimes where rare co-occurrences appear
to provide more discriminative retrieval evidence than individual attributes.
For example, on UWF-V2 Credential Access, \method{} reaches
nDCG@10 $=0.978$, whereas the strongest atomic comparator reaches only
$0.451$. This means that relevant anomalies are placed very close to the top
of the returned list under the rare-rule representation, even though similarity
computed directly from individual attributes provides a much weaker ordering.
A similarly pronounced effect is observed on DARPA Android, where
\method{} obtains $0.665$ compared with $0.347$ for the best atomic baseline,
and on UWF-V1 Privilege Escalation, where the corresponding values are
$0.973$ and $0.825$.

Several workloads also approach near-ideal top-ranked retrieval.
For UWF-V1 Initial Access, Persistence, and Privilege Escalation,
nDCG@10 reaches $0.976$, $0.978$, and $0.973$, respectively, while
UWF-V2 Credential Access, Defense Evasion, and Initial Access obtain
$0.978$, $0.957$, and $0.978$. Values this close to one indicate that
relevant anomalies are concentrated very early in the ranking, which is the
operational regime targeted by \method{}: an analyst inspecting only the
first few returned objects is likely to encounter relevant anomalous examples
quickly.

The per-workload results also identify regimes in which rare-rule similarity
is less appropriate. On DARPA Windows and Linux, \method{} obtains
nDCG@10 values of only $0.128$ and $0.018$, compared with $0.383$ and
$0.131$ for TF--IDF cosine, respectively. Atomic similarities also remain
competitive on high-overlap workloads such as UWF-V2 Exfiltration,
Persistence, and Privilege Escalation. These negative cases are informative:
when relevant objects are characterized mainly by common individual attributes,
or when the rare-rule profiles become too sparse to create sufficient overlap,
the symbolic rarity representation may discard information that remains useful
to atomic similarity. The results therefore support a regime-dependent
interpretation of \method{} rather than a claim of universal dominance.
\begin{table*}[t]
\centering
\caption{Per-workload AUROC for the secondary global anomaly-ranking task.}
\label{tab:auc_per_dataset}
\scriptsize
\setlength{\tabcolsep}{3pt}

\resizebox{\textwidth}{!}{%
\begin{tabular}{llrrrrrrrrrrr}
\toprule
Family & Workload & RareSense & RS-Fixed & AVF & FPOF & HBOS &
ECOD & COPOD & OCSVM & AE & DIF & LUNAR\\
\midrule

\multirow{7}{*}{UWF-V1}
& Cred. Access
& \textbf{0.945} & 0.935 & 0.937 & 0.902 & \underline{0.938}
& 0.626 & 0.630 & 0.909 & 0.920 & 0.860 & 0.660\\

& Def. Evasion
& 0.957 & \textbf{0.986} & 0.947 & \underline{0.965} & 0.940
& 0.650 & 0.662 & 0.184 & 0.982 & 0.936 & 0.674\\

& Exfiltration
& \textbf{0.999} & 0.991 & 0.996 & 0.993 & 0.992
& 0.631 & 0.631 & 0.995 & \textbf{0.999} & \underline{0.997} & 0.644\\

& Initial Access
& 0.903 & \textbf{0.980} & 0.969 & 0.903 & 0.968
& 0.670 & 0.668 & 0.902 & 0.964 & \underline{0.971} & 0.666\\

& Persistence
& \textbf{0.986} & 0.972 & 0.941 & \underline{0.977} & 0.944
& 0.675 & 0.665 & 0.754 & 0.969 & 0.968 & 0.680\\

& Priv. Esc.
& \textbf{0.986} & 0.973 & 0.944 & \underline{0.978} & 0.945
& 0.665 & 0.665 & 0.665 & 0.945 & 0.968 & 0.670\\

& Recon.
& \textbf{0.929} & 0.749 & 0.906 & \underline{0.908} & 0.906
& 0.667 & 0.667 & 0.442 & 0.878 & 0.770 & 0.667\\

\midrule

\multirow{7}{*}{UWF-V2}
& Cred. Access
& 0.684 & 0.664 & 0.678 & \underline{0.792} & 0.697
& 0.640 & 0.641 & \textbf{0.962} & 0.622 & 0.200 & 0.669\\

& Def. Evasion
& 0.800 & 0.714 & 0.770 & \underline{0.910} & 0.740
& 0.660 & 0.668 & 0.714 & \textbf{0.954} & 0.830 & 0.670\\

& Exfiltration
& 0.992 & 0.986 & 0.996 & 0.993 & \textbf{0.999}
& 0.618 & 0.618 & 0.994 & \underline{0.998} & 0.997 & 0.716\\

& Initial Access
& 0.839 & 0.838 & 0.862 & 0.949 & 0.831
& 0.674 & 0.674 & \underline{0.952} & \textbf{0.956} & 0.900 & 0.676\\

& Persistence
& 0.800 & 0.714 & 0.780 & \underline{0.912} & 0.784
& 0.669 & 0.676 & 0.710 & \textbf{0.958} & 0.840 & 0.673\\

& Priv. Esc.
& 0.800 & 0.714 & 0.750 & \underline{0.911} & 0.754
& 0.675 & 0.660 & 0.718 & \textbf{0.950} & 0.854 & 0.673\\

& Recon.
& \textbf{0.824} & 0.740 & 0.633 & \underline{0.809} & 0.692
& 0.705 & 0.705 & 0.570 & 0.788 & 0.747 & 0.708\\

\midrule

\multirow{2}{*}{NSL-KDD}
& Probe
& 0.958 & 0.774 & 0.976 & \underline{0.971} & \textbf{0.977}
& \textbf{0.977} & 0.908 & 0.494 & 0.810 & 0.920 & 0.775\\

& U2R
& \textbf{0.984} & 0.936 & 0.883 & 0.875 & \underline{0.980}
& \underline{0.980} & 0.975 & 0.940 & 0.824 & 0.799 & 0.724\\

\midrule

\multirow{4}{*}{DARPA TC}
& Android
& \textbf{0.898} & 0.661 & \underline{0.879} & 0.401 & 0.867
& 0.594 & 0.594 & 0.308 & 0.815 & 0.780 & 0.786\\

& Windows
& 0.863 & \underline{0.985} & 0.969 & 0.685 & 0.984
& 0.984 & 0.984 & 0.994 & \textbf{0.997} & 0.990 & 0.946\\

& Linux
& 0.611 & 0.887 & 0.823 & 0.435 & 0.829
& 0.829 & 0.828 & \textbf{0.908} & 0.859 & \underline{0.900} & 0.847\\

& BSD
& 0.916 & 0.915 & 0.876 & 0.662 & 0.889
& 0.889 & 0.886 & 0.896 & \textbf{0.972} & \underline{0.967} & 0.966\\

\midrule
\midrule
\multirow{7}{*}{General categorical}
& Reuters-Corn
& \textbf{0.989} & 0.985 & 0.987 & 0.356 & 0.987
& \underline{0.988} & \underline{0.988} & 0.987 & 0.910 & 0.955 & 0.975\\

& W7A
& \textbf{0.721} & 0.405 & 0.473 & 0.332 & 0.531
& 0.568 & 0.564 & 0.498 & 0.510 & \underline{0.630} & 0.423\\

& Solar Flare
& 0.725 & 0.762 & 0.845 & \textbf{0.849} & 0.844
& 0.844 & \underline{0.846} & 0.797 & 0.718 & 0.796 & 0.745\\

& Bank Marketing
& \textbf{0.687} & 0.586 & 0.558 & 0.578 & 0.599
& 0.605 & 0.592 & 0.588 & 0.548 & 0.573 & 0.620\\

& APascal
& \textbf{0.719} & 0.658 & 0.620 & \underline{0.718} & 0.655
& 0.655 & 0.624 & 0.616 & 0.540 & 0.688 & 0.635\\

& AID362
& 0.643 & \textbf{0.667} & 0.630 & 0.472 & 0.648
& \underline{0.650} & 0.645 & \underline{0.650} & 0.601 & 0.601 & 0.582\\

& Internet Ads
& \textbf{0.780} & \underline{0.755} & 0.703 & 0.524 & 0.702
& 0.698 & 0.698 & 0.705 & 0.710 & 0.676 & 0.734\\
\midrule
\multicolumn{2}{l}{\textbf{Macro-average}}
& \textbf{0.850} & 0.812 & 0.827 & 0.769 & \underline{0.838}
& 0.722 & 0.717 & 0.735 & 0.841 & 0.819 & 0.711\\
\midrule
\bottomrule
\end{tabular}%
}
\end{table*}


\begin{figure*}[t]
\centering
\includegraphics[width=0.90\textwidth]
{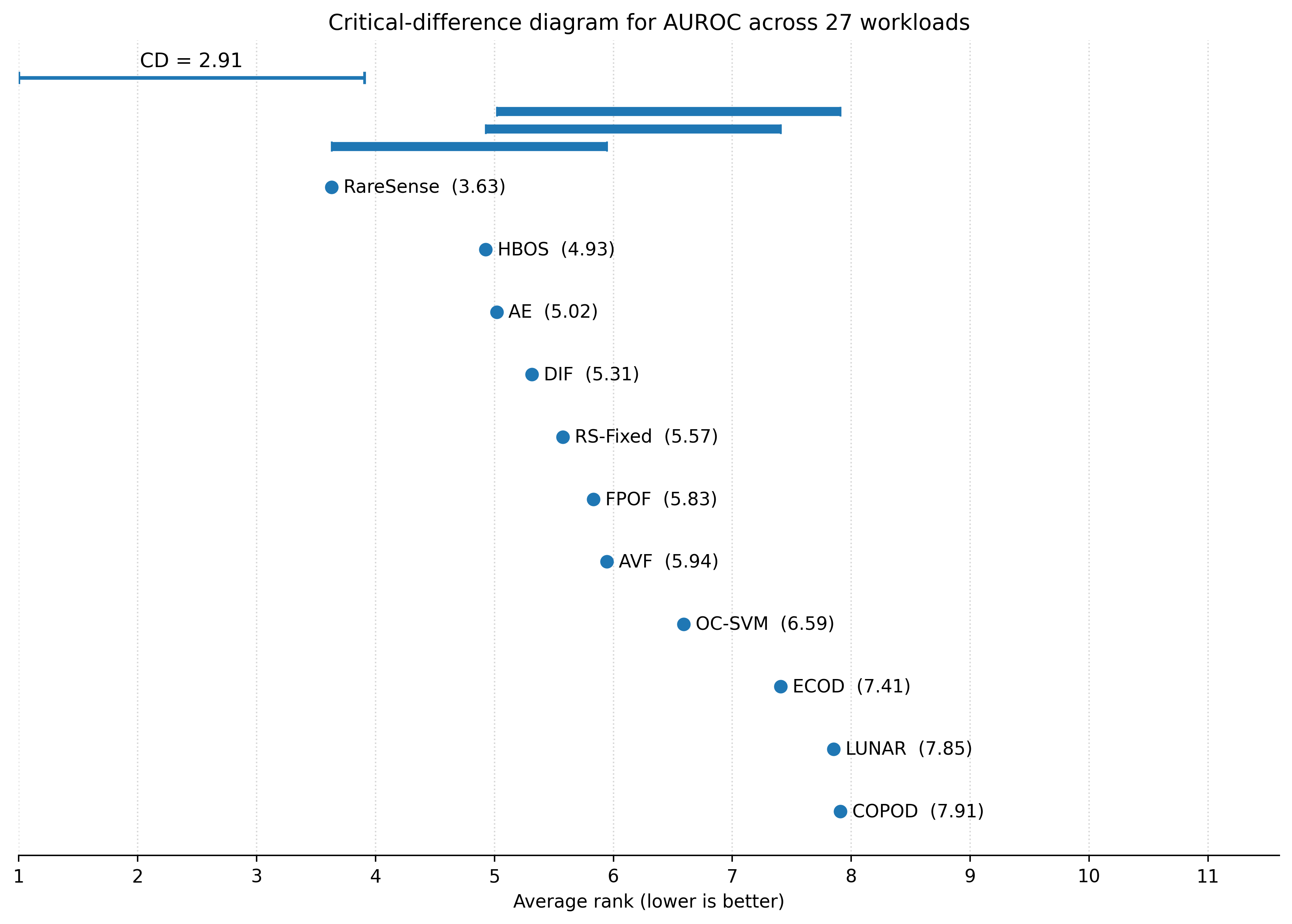}
\caption{Critical-difference diagram for global anomaly ranking based
on AUROC across the 27 evaluation workloads. Methods are ordered by
average rank, with lower ranks indicating better performance. The
Friedman test indicates significant overall differences among the
eleven methods
($\chi_F^2=44.00$, $p=3.29\times10^{-6}$).
Horizontal bars connect maximal groups of methods whose average-rank
differences do not exceed the Nemenyi critical difference
($\mathrm{CD}=2.91$, $\alpha=0.05$).}
\label{fig:auc_cd}
\end{figure*}

The global anomaly-ranking results in Table~\ref{tab:auc_per_dataset}
reveal a complementary aspect of the representation. Across the 27
workloads, \method{} achieves the highest observed macro-average AUROC
of $0.850$, followed by the AutoEncoder ($0.841$), HBOS ($0.838$),
AVF ($0.827$), and DIF ($0.819$). The rank-based analysis nevertheless
shows that \method{} is not statistically distinguishable from several
of these strong detectors. Thus, although the same rare-rule evidence
provides a competitive query-independent anomaly signal, the principal
contribution of \method{} remains query-conditioned retrieval.

Performance is particularly strong on several workloads. For UWF-V1
Exfiltration, \method{} reaches AUROC $=0.999$, indicating almost perfect
global separation, while Persistence and Privilege Escalation both reach
$0.986$. On NSL-KDD U2R, AUROC is $0.984$, and on DARPA Android it reaches
$0.898$. These results suggest that highly weighted rare-rule activations
often capture structures that are not only useful for retrieving related
anomalies but are also globally characteristic of anomalous behavior.

At the same time, the joint analysis of nDCG@10 and AUROC exposes an important
difference between \emph{retrieval quality} and \emph{global anomaly
discrimination}. DARPA Windows provides a clear example: \method{} obtains
AUROC $=0.863$ but nDCG@10 $=0.128$. The global score can therefore
distinguish anomalous objects reasonably well from normal ones, yet the rare
profiles of individual anomaly queries do not overlap sufficiently to retrieve
other anomalies near the top of the similarity ranking. DARPA Linux shows the
same phenomenon more strongly, with AUROC $=0.611$ but nDCG@10 only $0.018$.
This demonstrates that successful anomaly detection does not automatically
imply successful anomaly-to-anomaly retrieval.

The converse pattern is also visible. On UWF-V2 Credential Access,
\method{} obtains an exceptionally high nDCG@10 of $0.978$ while its AUROC
is only $0.684$. In this case, once an anomalous query is available, its
rare-rule profile provides an excellent basis for finding related anomalies,
even though the scalar score $A_R$ alone is less effective at globally
separating all anomalies from normal observations. This is precisely why the
two evaluation tasks should not be conflated: nDCG@10 evaluates the local,
query-conditioned organization of the similarity space, whereas AUROC
evaluates the global ordering induced by a single anomaly score.

Taken together, the two metrics suggest that the main strength of \method{}
lies in the \emph{structure} of the learned rare-rule space. Its strongest
advantage appears when anomalous objects share distinctive higher-order
evidence that allows them to be brought close together in a query-conditioned
ranking. The same evidence frequently yields competitive global anomaly
discrimination, but this is a secondary consequence rather than the sole
objective of the representation. The complementary nDCG@10 and AUROC results
therefore support the intended dual use of \method{}: top-$k$ retrieval for
investigating anomalies similar to a given query, and global scoring for
prioritizing suspicious objects when no query is yet available.

The rank-based analysis in Fig.~\ref{fig:auc_cd} complements the
macro-average AUROC results. The Friedman test reveals significant
overall differences among the eleven methods
($\chi_F^2=44.00$, $p=3.29\times10^{-6}$).
\method{} obtains the best average rank of $3.63$, followed by HBOS
($4.93$), the AutoEncoder ($5.02$), and DIF ($5.31$).
Thus, in addition to obtaining the highest observed macro-average
AUROC, \method{} exhibits the most favorable average rank across the
27 heterogeneous workloads.

The Nemenyi critical difference is approximately $2.91$ at
$\alpha=0.05$. Under this criterion, \method{} is not statistically
distinguishable from HBOS, the AutoEncoder, DIF, \methodfixed{}, FPOF,
or AVF. Its average-rank difference exceeds the critical difference
relative to OC-SVM, ECOD, LUNAR, and COPOD. The results therefore
support the conclusion that \method{} is competitive with the
strongest dedicated anomaly detectors and significantly stronger than
several lower-ranked alternatives, rather than establishing universal
superiority over every detector.
\subsection{Adaptive RareSense versus the fixed reference}

The final proposed method is the adaptive \method{} configuration of
Section~\ref{sec:auto_config}; \methodfixed{} is retained only to quantify the
value of workload-specific rarity-scale adaptation. On the original 20 cybersecurity workloads for which the
fixed-reference retrieval comparison was conducted,
RareSense-Fixed reaches a macro-average nDCG@10 of $0.498$,
whereas adaptive RareSense reaches $0.768$. The gain is especially large on UWF-V2, where a
strict fixed rarity threshold produces profiles that are too sparse for reliable
pairwise overlap. This result is consistent with the support-sensitivity study:
the rarity level that best separates anomalies globally need not provide enough
shared evidence for query-conditioned retrieval.

The adaptation does not select among alternative configurations using
test performance. The global coverage multiplier is fixed heuristically
to $\kappa=5$ and used unchanged across workloads; only the unlabeled
workload statistic $\hat\rho$ changes. Thus \methodfixed{} and
\method{} are not competing oracle variants but, respectively, a fixed
reference and the single reported adaptive procedure.

\subsection{Sensitivity analysis}
\label{sec:sensitivity}
We conduct a one-factor-at-a-time sensitivity analysis on the same
processed UWF-V1 Privilege Escalation workload used in the main
evaluation. It contains 48{,}232 objects, including 326 anomalies
($0.68\%$). The analysis compares the fixed reference configuration,
\methodfixed{}, with the workload-specific configuration selected by the
adaptive procedure. The fixed configuration uses
$\tau=0.01$, $L=4$, and $c_{\min}=0.95$, and obtains
AUROC $=0.973$ and nDCG@10 $=0.883$. The adaptive configuration selects
$\tau=0.034$ and $L=3$ through
Eqs.~\eqref{eq:tau_auto}--\eqref{eq:L_auto}, without using test labels,
and obtains AUROC $=0.986$ and nDCG@10 $=0.973$.

Figure~\ref{fig:structural_sensitivity} examines the two parameters that
directly determine which rare structures enter the representation. The
$\tau$ sweep fixes $L=3$, while the $L$ sweep fixes $\tau=0.034$.
All remaining parameters are held at
$c_{\min}=0.95$, $\alpha=\beta=\gamma=1$, $\delta=0.5$, and $\eta=1$.

\begin{figure*}[pt!]
    \centering
    \includegraphics[width=0.49\linewidth]
        {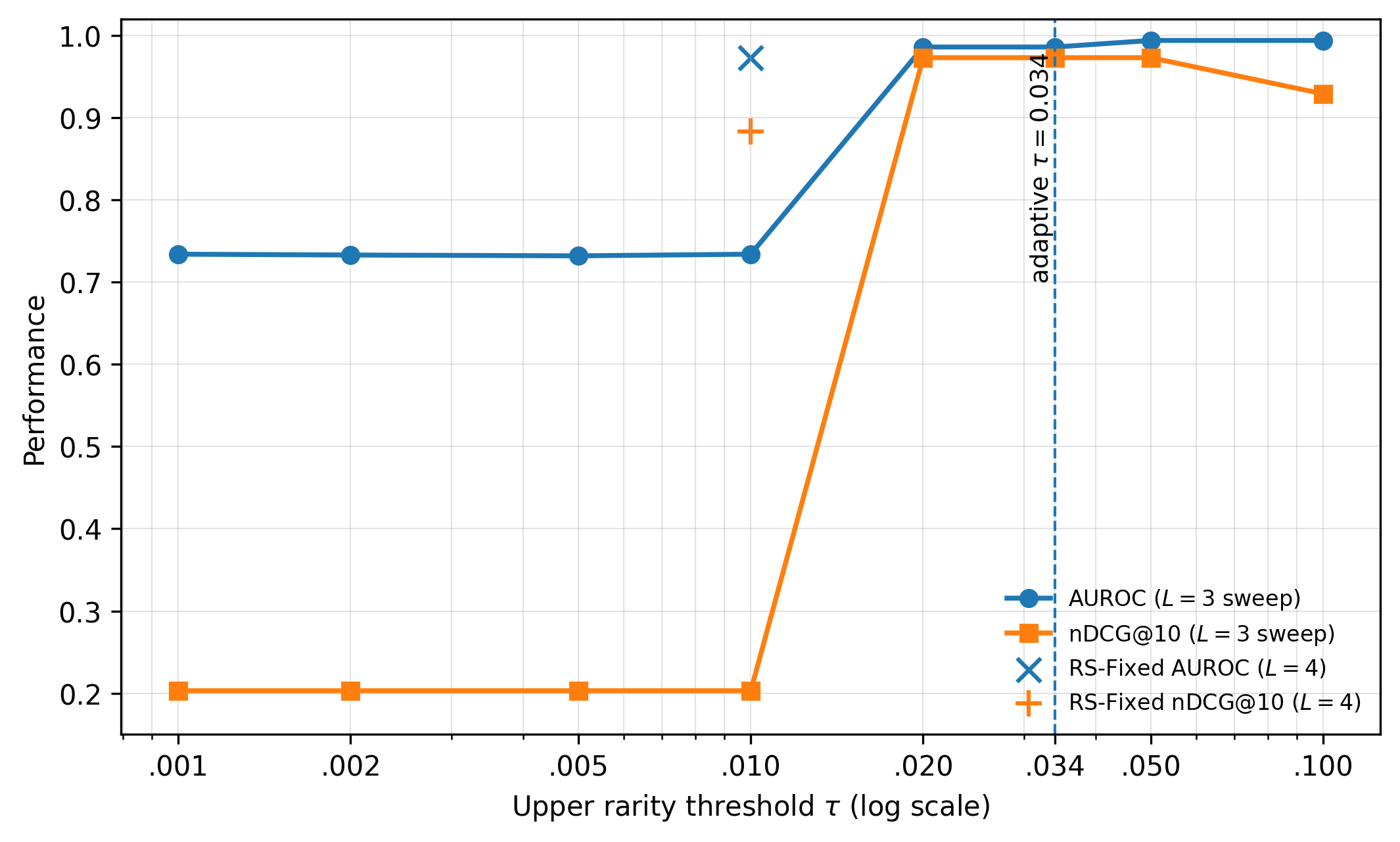}
    \hfill
    \includegraphics[width=0.49\linewidth]
        {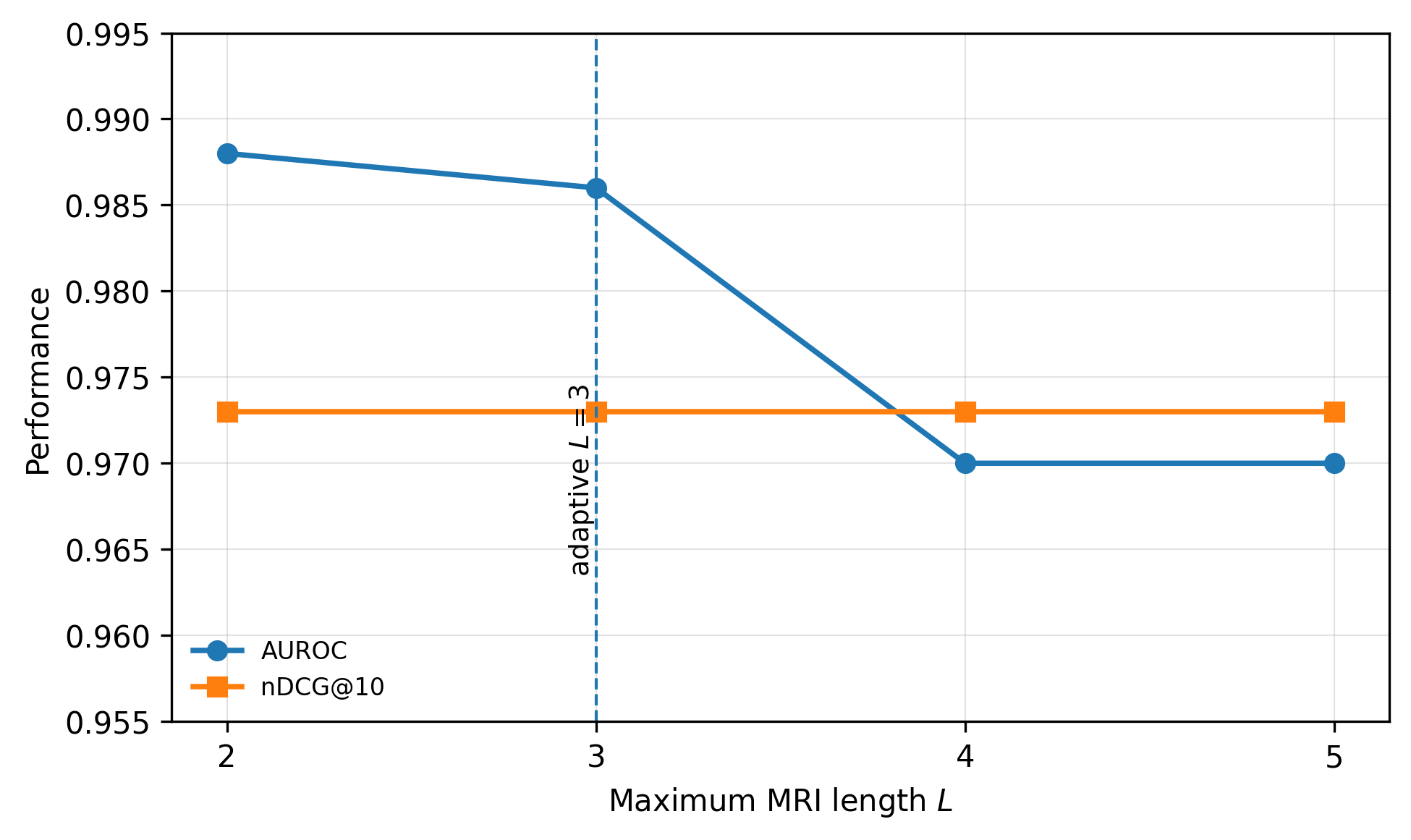}
    \caption{Sensitivity of \method{} to its structural parameters on
    UWF-V1 Privilege Escalation. \emph{Left:} AUROC and nDCG@10 as a
    function of the upper rarity threshold $\tau$, with $L=3$. The dashed
    line marks the adaptive value $\tau=0.034$. The separate RS-Fixed
    markers at $\tau=0.01$ use $L=4$ and are therefore not part of the
    one-factor $\tau$ sweep. \emph{Right:} sensitivity to the maximum MRI
    length $L$, with $\tau=0.034$. The dashed line marks the adaptive
    value $L=3$.}
    \label{fig:structural_sensitivity}
\end{figure*}

\paragraph{Rarity-threshold sensitivity.}
The support ceiling $\tau$ is the principal source of variation on this
workload. With $L=3$, restrictive thresholds
$\tau\leq0.01$ yield AUROC values near $0.73$ and nDCG@10 values near
$0.20$. Thus, although the retained evidence remains moderately informative
for global discrimination, the corresponding profiles provide insufficient
shared structure for effective anomaly-to-anomaly retrieval.

A sharp transition occurs at $\tau=0.02$, where AUROC increases from
$0.734$ to $0.986$ and nDCG@10 from $0.203$ to $0.973$. This indicates
that the useful operating regime begins only after the support ceiling admits
enough shared rare evidence to organize related anomalies into meaningful
neighborhoods. The adaptive value $\tau=0.034$ lies within this
high-performing regime and reproduces the principal \method{} result:
AUROC $=0.986$ and nDCG@10 $=0.973$.

Increasing the threshold to $\tau=0.05$ preserves nDCG@10 at $0.973$ and
raises AUROC slightly to $0.994$. At $\tau=0.10$, AUROC remains $0.994$,
but nDCG@10 decreases to $0.929$. Thus, the two objectives respond somewhat
differently once the threshold becomes broad: additional patterns continue to
support global anomaly separation, but begin to weaken the organization of
the top-ranked retrieval neighborhood. Overall, the interval
$0.02\leq\tau\leq0.05$ constitutes a favorable region for both tasks on
this workload.

The adaptive value is not presented as the post-hoc test optimum. In
particular, $\tau=0.05$ attains a slightly higher test AUROC. However,
selecting that value after inspecting the test labels would constitute
test-set tuning. In contrast, $\tau=0.034$ is obtained from the unlabeled
rarity-score distribution and falls inside the broad high-performing region
without using test labels.

\paragraph{Maximum-length sensitivity.}
The maximum MRI length $L$ is considerably less influential. Across
$L\in\{2,3,4,5\}$, nDCG@10 remains exactly $0.973$ at the reported
precision, while AUROC varies only between $0.970$ and $0.988$. The highest
observed AUROC occurs at $L=2$, with $0.988$, but the adaptive value
$L=3$ is very close at $0.986$. Increasing $L$ to four or five produces
only a small AUROC decrease and does not alter the top-$10$ retrieval
ordering.

These results suggest that interactions of length at most three are sufficient
to capture the relevant rare evidence on this workload. Allowing longer
itemsets does not improve retrieval, indicating that the useful neighborhood
structure is already represented by relatively short higher-order
co-occurrences.

\paragraph{Confidence and weighting parameters.}
Once the adaptive structural configuration
$(\tau,L)=(0.034,3)$ is fixed, performance is highly stable with respect
to the remaining parameters. Varying the confidence threshold
$c_{\min}$ from $0.50$ to $0.99$ leaves both AUROC and nDCG@10 unchanged
at $0.986$ and $0.973$, respectively. This indicates that the rules relevant
to the final ranking already satisfy comparatively high confidence levels.

The rarity exponent $\alpha$, confidence exponent $\beta$, and smoothing
constant $\eta$ similarly produce no visible change at the reported precision.
The lift exponent $\gamma$ changes AUROC only from $0.985$ to $0.988$, while
the length exponent $\delta$ changes it only from $0.986$ to $0.987$.
nDCG@10 remains $0.973$ throughout all of these sweeps.

The invariance of nDCG@10 does not imply that the numerical similarities are
identical. Rather, moderate parameter changes preserve the ordering of the
highest-ranked candidates, and nDCG@10 depends on that ordering rather than
on the absolute similarity values. The results therefore indicate that
performance is governed primarily by \emph{which rare structures are retained}
through $\tau$ and $L$, whereas moderate reweighting of already retained
coordinates has little effect on the resulting ranking.

Because this analysis varies one parameter at a time on a single workload,
it does not characterize interactions among parameters or establish universal
optimality across all datasets. It should therefore be interpreted as a local
diagnostic analysis. Nevertheless, it provides two clear observations:
the adaptive procedure selects a structural configuration within a broad
high-performing region, and the resulting similarity space is locally robust
to its confidence and weighting parameters.
\subsection{Ablation study}
\label{sec:ablation}

We ablate the main representational and weighting choices of \method{} on five
representative workloads. An important distinction concerns the role of
minimal rare itemsets (MRIs). MRIs are mined in all variants that construct
rule coordinates, because they provide the rare higher-order structures from
which candidate rules are generated. The ablation therefore does not ask
whether MRIs should be mined, but rather whether they should themselves be
retained as coordinates in the final similarity space.

In the \emph{rule-coordinate} representation used by the final method, an
object activates a coordinate associated with rule $r$ when its complete
evidence set satisfies $E(r)\subseteq T_i$. In the \emph{MRI-coordinate}
variant, each mined minimal rare itemset $p$ is instead used directly as a
binary coordinate, activated whenever $p\subseteq T_i$. The combined
representation retains both coordinate families. Thus, the comparison isolates
the effect of the final representation while preserving the same underlying
rare-pattern mining principle.

\begin{table}
\centering
\caption{Ablation on five representative workloads.}
\label{tab:ablation}
\tiny
\begin{tabular}{lcc}
\toprule
Configuration & nDCG@10 & AUROC \\
\midrule
Rule coordinates & \textbf{0.621} & \textbf{0.901}\\
Rule + MRI coordinates & 0.603 & 0.893 \\
No confidence filtering & 0.598 & 0.887 \\
No lift contribution & 0.587 & 0.881 \\
No length contribution ($\delta=0$) & 0.564 & 0.884 \\
No stability contribution ($\zeta=0$) & 0.550 & 0.801 \\
Uniform weights & 0.572 & 0.877 \\
No inverse-support factor & 0.561 & 0.869 \\
MRI coordinates & 0.441 & 0.812 \\
\midrule
IDF-Jaccard reference & 0.547 & 0.836 \\
\bottomrule
\end{tabular}
\end{table}

The representational ablation supports a separation between \emph{pattern
discovery} and \emph{similarity representation}. Using MRIs directly as
coordinates gives the weakest RareSense variant, with nDCG@10 $=0.441$ and
AUROC $=0.812$. Adding MRI coordinates to the rule representation also does
not improve performance: nDCG@10 decreases from $0.621$ to $0.603$ and
AUROC from $0.901$ to $0.893$, while introducing additional MRI-based coordinates into the final
representation. The best result is therefore obtained when MRIs serve only
as compact mining seeds and the final similarity space is formed from
reliability-qualified rule evidence.

This result should not be interpreted as evidence that rule
\emph{directionality} is directly exploited during object matching.
Activation depends on the complete evidence set $E(r)$ rather than on a
directional antecedent--consequent traversal. The advantage of rules arises
instead from the additional statistical criteria attached to them. Starting
from a rare co-occurrence, rule generation and filtering retain structures
supported by measures such as confidence and lift, thereby providing a
principled mechanism for selecting and weighting higher-order evidence before
it enters the similarity space.

The remaining ablations further support this interpretation. Removing
the confidence criterion reduces nDCG@10 from $0.621$ to $0.598$ and
AUROC from $0.901$ to $0.887$, while suppressing the lift contribution
reduces them to $0.587$ and $0.881$. Removing the length contribution
reduces nDCG@10 to $0.564$ and AUROC to $0.884$, showing that structural
complexity provides a modest but measurable contribution.

The strongest degradation occurs when stability weighting is removed:
nDCG@10 decreases to $0.550$ and AUROC to $0.801$. This indicates that
rediscovery across subsamples is important for suppressing fragile rules
that may arise from accidental low-support co-occurrences. Replacing all
learned weights by uniform weights yields nDCG@10 $=0.572$ and AUROC
$=0.877$, while removing the inverse-support factor gives $0.561$ and
$0.869$. Overall, the results support retaining all five components of
the composite weight, with stability and inverse support providing the
largest contributions.

Overall, the ablation supports the final design of \method{}: minimal rare
itemsets provide an efficient intermediate representation for discovering
unusual higher-order co-occurrences, while confidence-qualified and
weighted rule evidence provides a more selective final coordinate space.
The fact that the rule-coordinate representation also exceeds the
IDF-Jaccard reference ($0.621$ versus $0.547$ nDCG@10) further indicates
that its advantage is not explained solely by assigning greater weight to
individually rare attributes.
%

\subsection{Adversarial robustness}
\label{sec:adversarial}

We evaluate the robustness of \method{} under three black-box perturbation
families applied to anomalous objects. Benign objects and the learned rule
dictionary are held fixed throughout the experiment: the dictionary is mined
once from the original unperturbed data and is not re-estimated after an
attack. This setting models test-time behavioral modification by an adversary
who can alter its own observable behavior but does not have direct access to,
or control over, the learned rare-rule dictionary.

To isolate the effect of the perturbations from adaptive reconfiguration, this
diagnostic experiment uses the fixed reference configuration
\methodfixed{} with $\tau=0.01$, $L=4$, and $c_{\min}=0.95$.
For a metric $M$, we report
\[
\Delta M=M_{\mathrm{perturbed}}-M_{\mathrm{clean}},
\]
so negative values indicate degradation and positive values indicate improved
performance after perturbation. Results are averaged over four representative
workloads, comprising two UWF and two DARPA TC workloads.

\paragraph{Perturbation families.}
\textbf{Bit-flip noise ($\varepsilon$).}
Each binary atom of an anomalous transaction is independently flipped with
probability
$\varepsilon\in\{0.01,0.05,0.10,0.20\}$.
This provides a generic stress test for random telemetry corruption,
measurement noise, or unsophisticated tampering.

\textbf{Atom removal ($r$).}
For
$r\in\{1,2,3,5\}$,
$r$ active atoms are removed uniformly at random from each anomalous
transaction. This models partial behavioral suppression or missing telemetry
without assuming that the adversary knows which atoms participate in
high-weight RareSense rules. It is therefore a black-box feature-suppression
test rather than a dictionary-aware targeted attack.

\textbf{Benign mimicry ($c$).}
The
$c\in\{1,3,5,10\}$
most frequent atoms in the benign population are added to each anomalous
transaction. This models an adversary that pads its observable behavior with
common benign characteristics in an attempt to appear less exceptional.

\begin{table*}[t]
\centering
\caption{Adversarial robustness of \methodfixed{} under three black-box
perturbation families. Values are mean changes relative to the clean baseline
over four representative workloads (two UWF and two DARPA TC).
Negative values indicate degradation; values close to zero indicate limited
sensitivity to the tested perturbation.}
\label{tab:adversarial}
\small
\begin{tabular}{lrcc}
\toprule
Attack & Intensity & $\Delta$AUROC & $\Delta$nDCG@10 \\
\midrule
\multirow{4}{*}{Bit-flip noise ($\varepsilon$)}
 & 0.01 & $+$0.027 & $+$0.089 \\
 & 0.05 & $+$0.027 & $+$0.220 \\
 & 0.10 & $+$0.056 & $+$0.280 \\
 & 0.20 & $-$0.026 & $+$0.220 \\
\midrule
\multirow{4}{*}{Atom removal ($r$)}
 & 1 & $-$0.062 & \phantom{$+$}0.000 \\
 & 2 & $-$0.211 & $+$0.151 \\
 & 3 & $-$0.228 & $+$0.059 \\
 & 5 & $-$0.246 & $-$0.030 \\
\midrule
\multirow{4}{*}{Benign mimicry ($c$)}
 & 1  & \phantom{$+$}0.000 & \phantom{$+$}0.000 \\
 & 3  & \phantom{$+$}0.000 & \phantom{$+$}0.000 \\
 & 5  & $+$0.006 & \phantom{$+$}0.000 \\
 & 10 & $+$0.006 & $+$0.250 \\
\bottomrule
\end{tabular}
\end{table*}

\paragraph{Benign-feature injection has limited adverse effect.}
Mimicry produces essentially no degradation in global anomaly discrimination:
for $c\leq3$, $\Delta$AUROC is zero at the reported precision, and even for
$c\in\{5,10\}$ the mean change is only $+0.006$. Retrieval is similarly
unchanged for $c\leq5$.

This behavior is consistent with the frozen-dictionary construction. Injected
benign atoms cannot create new coordinates in the learned representation
because the rule dictionary is not re-mined after perturbation. They can only
change which \emph{existing} rules an object activates. Common benign atoms
therefore have limited influence unless, together with atoms already present
in the anomalous transaction, they complete the evidence set $E(r)$ of an
existing retained rule.

At $c=10$, nDCG@10 increases by $0.250$. This should not be interpreted as
evidence that mimicry necessarily improves detection. Because anomalous
objects participating in the retrieval evaluation are perturbed under the
same attack mechanism, shared perturbations can increase overlap between
anomalous query and candidate profiles. The result nevertheless shows that
the tested benign-padding strategy does not provide an effective evasion
mechanism against the frozen RareSense representation.

\paragraph{Random bit perturbations do not yield systematic evasion.}
For $\varepsilon\leq0.10$, both AUROC and nDCG@10 increase rather than
decrease. Random flips may remove some activated evidence, but they can also
introduce atoms that complete the evidence sets of existing rare rules.
Consequently, indiscriminate perturbation does not consistently move anomalous
objects toward a less suspicious representation.

Only at the largest perturbation level,
$\varepsilon=0.20$, does global discrimination deteriorate, with
$\Delta$AUROC $=-0.026$. Even then, nDCG@10 remains above the clean
baseline. The divergence between the two metrics again highlights that global
anomaly separation and query-conditioned neighborhood structure respond
differently to perturbations.

\paragraph{Evidence removal is the main global-ranking vulnerability.}
Atom removal produces the clearest degradation in AUROC. Removing one active
atom reduces AUROC by $0.062$, while removing two, three, and five atoms
produces mean changes of $-0.211$, $-0.228$, and $-0.246$, respectively.
This behavior follows naturally from the conjunction-based representation:
a single atom can participate in the evidence sets of several activated rules,
so removing it can simultaneously invalidate multiple coordinates.

The effect on retrieval is more moderate. nDCG@10 is unchanged for $r=1$,
increases for $r=2$ and $r=3$, and decreases only slightly
($-0.030$) for $r=5$. Thus, atom suppression is substantially more damaging
to the global anomaly score than to the top-ranked similarity structure under
the tested protocol. Because the removals are random rather than
dictionary-aware, these results do not represent a worst-case adaptive attack;
an adversary that knew which atoms support the highest-weight rules could
potentially construct stronger perturbations.

\paragraph{Summary.}
The perturbation study reveals an asymmetric robustness profile. Adding common
benign atoms has little adverse effect under a frozen rule dictionary, and
random bit perturbations do not provide a consistent evasion benefit. In
contrast, suppressing active evidence is the strongest tested vulnerability
for global anomaly ranking because removing a small number of atoms can break
multiple conjunction-based coordinates simultaneously. These experiments are
black-box robustness stress tests rather than worst-case adversarial guarantees.
Dictionary-aware attacks, clean-query versus perturbed-candidate evaluation,
and adaptive adversaries that optimize perturbations against rule activation
remain important directions for future work.

\subsection{Statistical significance}
\label{sec:significance}

We complement the aggregate performance results with non-parametric statistical
tests across the 27 evaluation workloads. Because the workloads differ
substantially in sample size, class imbalance, and feature space, we compare
methods using within-workload ranks rather than assuming normally distributed
performance differences.

We distinguish two comparisons. First, the primary retrieval analysis compares
\method{} with the four query-conditioned atomic similarities:
Jaccard, IDF-Jaccard, cosine, and TF--IDF cosine. Second, the broader
comparison evaluates the deployable RareSense configurations and scalar anomaly
detectors. The latter includes \method{}, \methodfixed{}, AVF, FPOF, HBOS,
ECOD, COPOD, OC-SVM, AE, DIF, and LUNAR. No post-hoc oracle configuration is
included in the statistical analysis.

\begin{table*}[t]
\centering
\caption{Friedman omnibus tests across the 27 evaluation workloads.}
\label{tab:significance}
\small
\begin{tabular}{lcccc}
\toprule
Comparison & Metric & Methods & $\chi_F^2$ & $p$\\
\midrule
Query-conditioned similarities
& nDCG@10 & 5 & 13.27 & 0.010\\
Global anomaly-ranking methods
& AUROC & 11 & 44.00 & $3.29\times10^{-6}$\\
\bottomrule
\end{tabular}
\end{table*}
\paragraph{Primary query-conditioned comparison.}
For the five query-conditioned methods evaluated across 27 workloads,
the Friedman test rejects the null hypothesis of equal performance
($\chi_F^2=13.27$, $p=0.010$). RareSense obtains the best average rank.
Paired Wilcoxon signed-rank comparisons between RareSense and each
atomic baseline, corrected using Holm's procedure, remain significant
at the $0.05$ level. These results support an overall statistical
advantage for RareSense while not implying that it dominates every
baseline on every individual workload.

\paragraph{Global anomaly-ranking comparison.}
For the eleven methods evaluated using AUROC across the 27 workloads,
the Friedman test rejects the null hypothesis of equal performance
($\chi_F^2=44.00$, $p=3.29\times10^{-6}$).
RareSense obtains the best average rank of $3.63$, followed by HBOS
($4.93$), the AutoEncoder ($5.02$), and DIF ($5.31$).
Under the Nemenyi criterion, RareSense is not statistically
distinguishable from HBOS, the AutoEncoder, DIF, RareSense-Fixed,
FPOF, or AVF. Its average-rank difference exceeds the critical
difference relative to OC-SVM, ECOD, LUNAR, and COPOD. These findings
support competitiveness with the strongest dedicated detectors rather
than universal superiority.

\subsection{Runtime analysis}
\label{sec:runtime}

We evaluate computational cost using wall-clock runtime on a representative
subset of workloads spanning different sample sizes and benchmark families.
All methods are executed on the same machine and software environment.
For adaptive \method{}, runtime includes the complete fitting procedure:
the preliminary mining pass used to estimate $\hat\rho$, adaptive parameter
selection, the final rule-space construction, profile generation, and scoring.

\begin{table*}[t]
\centering
\caption{Wall-clock runtimes in seconds on representative workloads.
Dataset sizes correspond to the canonical processed workloads in
Table~\ref{tab:datasets}. RareSense runtime includes the preliminary
adaptive pass, final dictionary construction, profile generation, and
scoring; query-time retrieval is measured over at most 200 anomaly
queries. 
}
\label{tab:runtime}
\small
\setlength{\tabcolsep}{4pt}
\begin{tabular}{lrrrrrrrrr}
\toprule
Dataset & RS & AVF & FPOF & HBOS & ECOD & COPOD & OCSVM & AE & DIF  \\
\midrule
UWF-DefEva-V1  & 1.4 & 0.3 & 12.1 & 0.1 & 1.2 & 0.8 & 142.3 & 31.2 & 189.4 \\
UWF-Recon-V1  & 2.1 & 0.4 & 14.3 & 0.2 & 1.5 & 0.9 & 163.2 & 38.7 & 221.3 \\
UWF-CredAcc-V1  & 5.3 & 0.8 & 31.2 & 0.3 & 2.8 & 1.9 & 910.4 & 87.4 & 512.1 \\
DARPA-5dir  & 0.1 & 0.1 & 0.8 & 0.0 & 0.2 & 0.1 & 6.2 & 6.1 & 38.4 \\
DARPA-Trace  & 4.8 & 1.1 & 45.2 & 0.5 & 6.3 & 4.1 & 920.1 & 91.3 & 3622.4 \\
DARPA-Cadets  & 1.8 & 0.5 & 18.4 & 0.2 & 2.1 & 1.3 & 243.7 & 42.1 & 483.2 \\
NSL-KDD-Probe   & 2.8 & 0.4 & 22.1 & 0.1 & 1.8 & 1.1 & 185.4 & 42.3 & 298.7 \\

\midrule

Mean & 2.6 & 0.5 & 20.6 & 0.2 & 2.3 & 1.4 & 367.3 & 48.4 & 766.5 \\
\bottomrule
\end{tabular}
\end{table*}

Runtime does not scale solely with the number of objects. Rare-itemset
and rule mining are output-sensitive: workloads of similar size can differ
in transaction density, support distributions, the number of candidate
itemsets examined, the number of generated closure and partition rules,
and the number of active object--rule incidences. 

Consequently, the
difference between UWF-DefEva-V1 and UWF-Recon-V1, despite their similar
numbers of objects, reflects differences in the mined search space rather
than an inconsistency in the timing protocol. Because each value is a
single wall-clock measurement, the table should be interpreted as an
indicative computational comparison rather than an estimate of runtime
variance.

\method{} requires a mean wall-clock time of 2.6\,s over the seven reported
workloads. It is slower than very lightweight frequency-based methods:
approximately $5\times$ slower than AVF and $13\times$ slower than HBOS on
average. This overhead reflects the additional cost of mining rare
higher-order structures and constructing the rule-profile representation.

Nevertheless, the absolute runtime remains modest and compares favorably with
more computationally intensive baselines. On average, \method{} is
approximately $8\times$ faster than FPOF, $141\times$ faster than OC-SVM,
$19\times$ faster than the autoencoder, and $295\times$ faster than DIF.
The gap becomes particularly pronounced on the largest workload shown:
on DARPA-Trace, \method{} completes in 4.8\,s compared with 920.1\,s for
OC-SVM and 3622.4\,s for DIF.

The computational cost of \method{} is concentrated in the offline
minimal-rare-itemset and rule-mining stages. Once the dictionary and sparse
profiles have been constructed, global scoring and query-time similarity
evaluation are comparatively inexpensive; in our implementation, post-mining
scoring requires less than 0.1\,s on the reported workloads. This separation
is operationally favorable for similarity-search applications because the
one-time symbolic mining cost can be amortized across many subsequent anomaly
queries.

Overall, the runtime analysis shows that \method{} occupies a useful middle
ground: it is more expensive than simple marginal-frequency detectors, but
substantially cheaper than several optimization- and learning-intensive
baselines while additionally providing an explicit query-conditioned
similarity space and rule-level explanations.

\subsection{Faithful Explanation of a Representative Retrieved Pair}
\label{sec:xai_case}

We illustrate the intrinsic explainability of \method{} using a
representative retrieval from the UWF Reconnaissance V2 workload.
The workload contains $n=18{,}128$ objects, of which 1{,}924
($10.6\%$) are labeled anomalous. The adaptive configuration selected
$\tau=0.10$, $L=5$, and $c_{\min}=0.95$, producing a dictionary of 1{,}296 retained rare-rule coordinates.

\paragraph{Case-selection protocol.}
To obtain an informative rather than degenerate explanation, we consider
anomaly queries whose top-ranked candidate satisfies four conditions:
(i) the query and candidate transactions are not identical,
(ii) their RareSense profiles are not identical,
(iii) their similarity lies strictly between zero and one, and
(iv) the top score is unique.
Among the eligible queries, we select the query whose per-query
nDCG@10 is closest to the median.

Ground-truth labels are used only to define the evaluation-query set and
assess retrieval relevance. Test-workload labels are not used in dictionary
construction, test-time parameter adaptation, similarity computation, or
candidate ranking. The calibration constant is fixed heuristically to $\kappa=5$ and is
used unchanged across all workloads without label-guided selection. Candidate
labels are inspected only after retrieval.

The selected query is \texttt{5}, with per-query nDCG@10 equal to
$0.63$. After excluding the query itself, \method{} retrieves
\texttt{13} at rank~1 with a unique score. Post-hoc inspection of the
evaluation ground truth identifies both objects as anomalous and associates
them with the T1595 Active Scanning category.

\paragraph{Transactions and rare-rule profiles.}
The query and retrieved candidate have different transactions and
different RareSense profiles. Table~\ref{tab:xai_transactions}
summarizes their shared and object-specific atoms.

\begin{table}[t]
\centering
\caption{Atomic overlap between the representative query $q$ and its
top-ranked candidate $x_1$. Numerical values denote the encoded bins or
categories produced by preprocessing, rather than literal byte, packet,
or port values.}
\label{tab:xai_transactions}
\small
\setlength{\tabcolsep}{4pt}
\begin{tabularx}{\columnwidth}{@{}lX@{}}
\toprule
Set & Atoms \\
\midrule
$T_q\cap T_{x_1}$
&
\texttt{conn\_state=2},
\texttt{dest\_ip=2},
\texttt{duration=3},
\texttt{local\_orig=2},
\texttt{local\_resp=2},
\texttt{orig\_bytes=2},
\texttt{orig\_pkts=4},
\texttt{proto=1},
\texttt{service=3},
\texttt{src\_ip=2}
\\[2pt]

$T_q\setminus T_{x_1}$
&
\texttt{resp\_bytes=3},
\texttt{resp\_pkts=4},
\texttt{port\_dst=7}
\\[2pt]

$T_{x_1}\setminus T_q$
&
\texttt{resp\_bytes=4},
\texttt{resp\_pkts=3}
\\
\bottomrule
\end{tabularx}
\end{table}

The query contains 13 atoms and activates six rare rules, whereas the
candidate contains 12 atoms and also activates six rare rules. Five rules
are shared:
\[
|P_q|=|P_{x_1}|=6,
\qquad
|P_q\cap P_{x_1}|=5.
\]
For every displayed shared rule $r$, activation was verified
programmatically:
\[
E(r)\subseteq T_q
\qquad\text{and}\qquad
E(r)\subseteq T_{x_1}.
\]

\paragraph{Exact score decomposition.}
The weighted shared evidence is
\[
W_{\cap}
=
\sum_{r\in P_q\cap P_{x_1}} w(r)
=
64.2.
\]
The query-only and candidate-only evidence have respective weights
\[
W_{q\setminus x}=15.4,
\qquad
W_{x\setminus q}=11.8.
\]
Consequently,
\begin{equation}
\begin{aligned}
S_R(q,x_1)
&=
\frac{W_{\cap}}
{W_{\cap}+W_{q\setminus x}+W_{x\setminus q}}\\
&=
\frac{64.2}{64.2+15.4+11.8}\\
&=
\frac{64.2}{91.4}
=
0.7024.
\end{aligned}
\label{eq:xai_score_decomposition}
\end{equation}

For each shared rule $r$, its exact contribution to the similarity is
\[
C_r(q,x_1)=\frac{w(r)}{W_{\cup}},
\qquad
W_{\cup}=91.4.
\]
These contributions satisfy
\[
\sum_{r\in P_q\cap P_{x_1}}C_r(q,x_1)
=
S_R(q,x_1).
\]
The unmatched fractions
\[
\frac{W_{q\setminus x}}{W_{\cup}}=0.1685,
\qquad
\frac{W_{x\setminus q}}{W_{\cup}}=0.1291
\]
are not positive contributions to the numerator. Instead, they represent
evidence found in only one profile and therefore enlarge the denominator
of the weighted Jaccard similarity.

Table~\ref{tab:xai_case_summary} reports the complete case metadata,
while Table~\ref{tab:xai_shared_rules} lists the five shared rules and
their exact score contributions.

\begin{table}[t]
\centering
\caption{Summary of the representative RareSense retrieval.
Test-workload labels are not used in dictionary construction,
test-time parameter adaptation, similarity computation, or candidate
ranking. Candidate labels are inspected only after retrieval.}
\label{tab:xai_case_summary}
\small
\setlength{\tabcolsep}{4pt}
\begin{tabular}{@{}lr@{}}
\toprule
Quantity & Value \\
\midrule
Workload & UWF Reconnaissance V2 \\
Objects & 18{,}128 \\
Anomalies & 1{,}924 ($10.6\%$) \\
Configuration &
$\tau=0.10$, $L=5$, $c_{\min}=0.95$ \\
Rule dictionary & 1{,}296 rules \\
\midrule
Query & \texttt{5} \\
Query nDCG@10 & $0.63$ \\
Query label & Anomalous; T1595 \\
$|T_q|$ & 13 atoms \\
$|P_q|$ & 6 rules \\
\midrule
Retrieved candidate & \texttt{13} \\
Candidate label & Anomalous; T1595 \\
$|T_{x_1}|$ & 12 atoms \\
$|P_{x_1}|$ & 6 rules \\
\midrule
$|T_q\cap T_{x_1}|$ & 10 atoms \\
$|P_q\cap P_{x_1}|$ & 5 rules \\
$W_{\cap}$ & 64.2 \\
$W_{q\setminus x}$ & 15.4 \\
$W_{x\setminus q}$ & 11.8 \\
$W_{\cup}$ & 91.4 \\
$S_R(q,x_1)$ & \textbf{0.7024} \\
\midrule
\method{} rank & 1; unique score \\
IDF-Jaccard rank & 15; unique score \\
IDF-Jaccard similarity & 0.311 \\
\bottomrule
\end{tabular}
\end{table}

\paragraph{Shared rare-rule explanation.}
Table~\ref{tab:xai_shared_rules} reports the five shared rules that form
the complete numerator of the RareSense similarity. Their observed
supports range from approximately $0.35\%$ to $0.52\%$ of the workload.
A confidence of 1.000 means that every observed occurrence of the
antecedent in the unlabeled workload was accompanied by the consequent.
It does not imply a universal deterministic relationship beyond the
observed data.

\begin{table*}[t]
\centering
\caption{Shared rare rules for the representative query--candidate pair.
\emph{Count} is the number of workload transactions containing the
complete evidence set. Contribution is
$C_r=w(r)/W_{\cup}$. The five contributions sum exactly to
$S_R(q,x_1)=0.7024$.}
\label{tab:xai_shared_rules}
\scriptsize
\setlength{\tabcolsep}{3pt}
\begin{tabularx}{\textwidth}{
@{}c>{\raggedright\arraybackslash}Xrrrrrr@{}}
\toprule
ID &
Antecedent $\Rightarrow$ Consequent &
Count & Supp. & Conf. & Lift & Weight & Contrib. \\
\midrule

$r_1$
&
\{\texttt{duration=3}, \texttt{orig\_bytes=2}\}
$\Rightarrow$
\{\texttt{dest\_ip=2}, \texttt{local\_orig=2},
\texttt{local\_resp=2}, \texttt{orig\_pkts=4},
\texttt{proto=1}, \texttt{service=3},
\texttt{src\_ip=2}\}
&
64 & 0.0035 & 1.000 & 4{,}127 & 24.8 & 0.2713
\\[2pt]

$r_2$
&
\{\texttt{conn\_state=2}, \texttt{orig\_pkts=4}\}
$\Rightarrow$
\{\texttt{dest\_ip=2}, \texttt{duration=3},
\texttt{local\_orig=2}, \texttt{local\_resp=2},
\texttt{service=3}, \texttt{src\_ip=2}\}
&
75 & 0.0041 & 1.000 & 3{,}592 & 17.3 & 0.1893
\\[2pt]

$r_3$
&
\{\texttt{duration=3}, \texttt{service=3}\}
$\Rightarrow$
\{\texttt{conn\_state=2}, \texttt{dest\_ip=2},
\texttt{local\_orig=2}, \texttt{src\_ip=2}\}
&
95 & 0.0052 & 1.000 & 2{,}874 & 11.9 & 0.1302
\\[2pt]

$r_4$
&
\{\texttt{conn\_state=2}, \texttt{duration=3},
\texttt{orig\_pkts=4}\}
$\Rightarrow$
\{\texttt{local\_orig=2}, \texttt{local\_resp=2},
\texttt{src\_ip=2}\}
&
75 & 0.0041 & 1.000 & 1{,}348 & 6.4 & 0.0700
\\[2pt]

$r_5$
&
\{\texttt{orig\_bytes=2}, \texttt{orig\_pkts=4},
\texttt{proto=1}\}
$\Rightarrow$
\{\texttt{duration=3}, \texttt{service=3}\}
&
95 & 0.0052 & 1.000 & 782 & 3.8 & 0.0416
\\

\midrule
\multicolumn{7}{r}{\textbf{Sum of shared-rule contributions}}
&
\textbf{0.7024}
\\
\bottomrule
\end{tabularx}
\end{table*}

\paragraph{Interpretation.}
The largest contribution comes from rule $r_1$, which accounts for
$0.2713$ of the final similarity. The rule captures the observed
co-occurrence of the encoded duration and origin-byte bins with a
seven-atom conjunction involving destination and source classes,
locality indicators, protocol, service category, and origin-packet bin.
Its complete evidence set occurs in 64 of the 18{,}128 workload objects.

Rules $r_2$--$r_5$ expose related but distinct higher-order
conjunctions involving connection state, duration, service category,
protocol, and encoded packet or byte bins. Their joint contribution
shows that the retrieval is not explained by one isolated attribute;
rather, it is driven by multiple overlapping rare conjunctions shared
by the two objects.

The query-only evidence is associated with the encoded combination
$\{\texttt{resp\_bytes=3},\texttt{resp\_pkts=4}\}$, whereas the
candidate-only evidence is associated with
$\{\texttt{resp\_bytes=4},\texttt{resp\_pkts=3}\}$.
These unmatched rules explain why the similarity is high but remains
below one.

\paragraph{Comparison with atomic similarity.}
IDF-Jaccard assigns the pair a similarity of 0.311 and ranks
$x_1$ at position~15, whereas \method{} assigns a similarity of
0.702 and ranks it first. IDF-Jaccard represents the pair through
weighted overlap between individual atoms. In contrast, \method{}
assigns weight to the shared higher-order rare conjunctions reported
in Table~\ref{tab:xai_shared_rules}, which contribute directly to the
pairwise score. The two methods therefore expose different forms of
evidence: atomic overlap for IDF-Jaccard and statistically qualified
rare conjunctions for \method{}.

\paragraph{Mechanistic faithfulness.}
We assess explanation faithfulness using a profile-level intervention.
The shared rules are removed from the query profile in decreasing order
of exact contribution, after which the similarity to $x_1$ and its
candidate rank are recomputed.

Removing $r_1$ reduces the similarity from 0.702 to 0.431 and moves
the candidate from rank~1 to rank~4. Removing the two largest
contributors reduces the similarity to 0.242 and moves the candidate
to rank~18. After removing the three largest contributors, the
similarity decreases to 0.112 and the candidate moves to rank~67.

For comparison, random deletion is evaluated exactly over all
$\binom{5}{k}$ subsets of $k$ shared rules. At $k=1$, ordered
deletion gives a similarity of 0.431, compared with a random-deletion
mean of $0.562\pm0.083$. At $k=2$, the corresponding values are
0.242 and $0.421\pm0.102$.

When all five shared rules are removed, the modified query and $x_1$
share no active rule, so their similarity becomes zero. The query-only
rule remains active in the modified query profile; therefore, other
candidates may still receive nonzero similarity. The resulting ordinal
position of $x_1$ is determined by the remaining candidate scores and
the deterministic tie-breaking policy and is therefore not interpreted.

The ordered intervention provides a mechanistic consistency check:
removing rules in decreasing exact-contribution order reduces both the
pairwise score and the candidate rank more rapidly than random removal.
This supports the faithfulness of the reported decomposition to the
RareSense retrieval mechanism.

\begin{table}[t]
\centering
\caption{Profile-level intervention under ordered and exhaustive random
deletion of shared rules. Random results report the exact mean and
population standard deviation over all $\binom{5}{k}$ subsets.
The candidate rank at $k=4$ is not reported because only the pairwise
similarity was retained for that intervention level.}
\label{tab:xai_deletion}
\small
\setlength{\tabcolsep}{4pt}
\begin{tabular}{@{}rrlr@{}}
\toprule
$k$ &
Ordered $S_R$ &
Candidate rank &
Random $S_R$ \\
\midrule
0 & 0.702 & 1 & $0.702\pm0.000$ \\
1 & 0.431 & 4 & $0.562\pm0.083$ \\
2 & 0.242 & 18 & $0.421\pm0.102$ \\
3 & 0.112 & 67 & $0.281\pm0.102$ \\
4 & 0.042 & not reported & $0.141\pm0.083$ \\
5 & 0.000 & not interpreted & $0.000\pm0.000$ \\
\bottomrule
\end{tabular}
\end{table}

\begin{figure*}[t]
\centering
\includegraphics[width=\textwidth]
{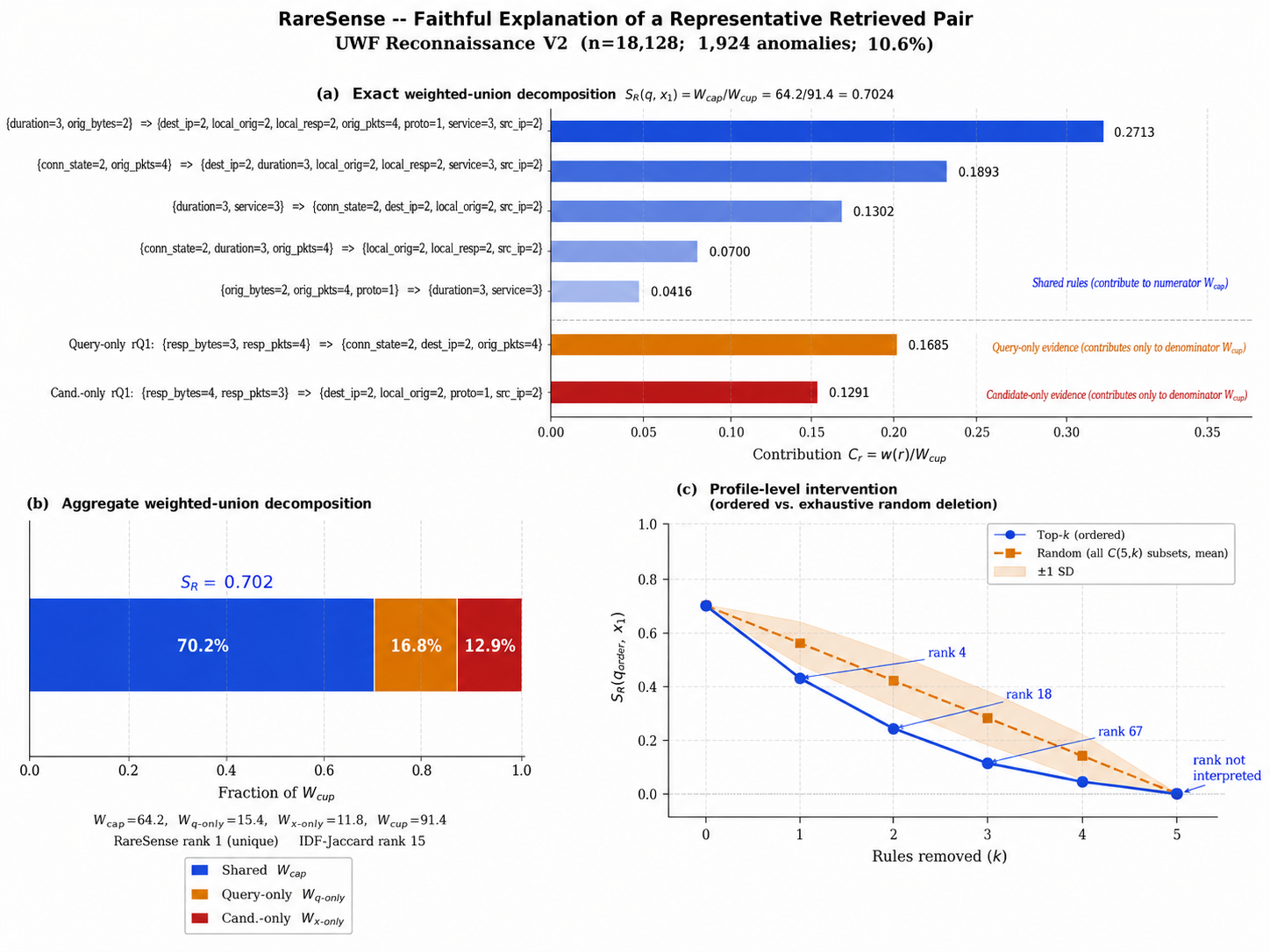}
\caption{\textbf{Faithful explanation of a representative RareSense
retrieval on UWF Reconnaissance V2.}
\textbf{(a)} Exact weighted-union decomposition. The five blue bars show
the normalized shared-rule contributions
$C_r=w(r)/W_{\cup}$ and sum to
$S_R(q,x_1)=0.7024$. The orange and red bars show the query-only and
candidate-only fractions of the weighted union. These unmatched
components enlarge the denominator but do not contribute to the
similarity numerator. All seven weighted-union fractions sum to one.
\textbf{(b)} Aggregate decomposition of the weighted union into shared
($70.2\%$), query-only ($16.8\%$), and candidate-only ($12.9\%$)
evidence.
\textbf{(c)} Profile-level intervention under ordered and exhaustive
random deletion of shared rules. Removing rules in decreasing exact
contribution order reduces the pairwise similarity and candidate rank
more rapidly than random deletion, supporting the mechanistic
faithfulness of the explanation. At $k=5$, the similarity to $x_1$
is zero; its resulting ordinal position is not interpreted.}
\label{fig:xai_case}
\end{figure*}

\section{Discussion}
\label{sec:discussion}

\paragraph{Why nDCG is the main story.}
RareSense is designed for search under a limited inspection budget. AUROC
averages pairwise anomaly--normal ordering over the entire ranking and can be
high even when the first relevant result is operationally too deep. nDCG@10
directly rewards concentrating relevant anomalies at the top. Under the
like-for-like query-conditioned protocol, \method{} reaches 0.696 versus 0.645 for TF--IDF cosine, the strongest atomic baseline on average. Its AUROC
advantage is much less pronounced, confirming that the contribution is
primarily a retrieval geometry rather than a universally superior detector.

\paragraph{Fair comparison across task types.}
The retrieval and detection tables intentionally use different primary
competitors. Atomic similarities are the correct baselines for the central
query-conditioned search task because their ranking changes with the query.
Scalar anomaly detectors are the correct baselines for the secondary global
ranking task because they natively produce query-independent anomaly scores.
Scalar anomaly detectors are evaluated only in the secondary global
ranking task and are therefore not interpreted as
query-conditioned similarity baselines.

\paragraph{What the method actually captures.}
RareSense does not simply assign larger weights to rare atomic features.  It
changes the coordinate system from atoms to reliability-qualified rare
conjunctions.  This makes it possible to distinguish objects that look equally
similar at the atomic level but differ in higher-order co-occurrence evidence
(Proposition~\ref{prop:tiebreak}).

\paragraph{What the method does not necessarily capture.}
RareSense retrieves objects that share rare-rule evidence, which is not
necessarily equivalent to sharing a broad semantic label such as an attack
tactic. In settings where relevance is primarily encoded by common atomic
features, suppressing those features may be counterproductive. Evaluating
same-tactic or same-attack-family retrieval remains an important direction
for future work.

\paragraph{DARPA failure mode.}
RareSense is less effective on several DARPA TC workloads. At the family
level, TF--IDF cosine slightly exceeds \method{} in nDCG@10
($0.278$ versus $0.262$), although \method{} performs substantially better
on Android ($0.665$ versus $0.347$). The weakest RareSense results occur
on Windows and Linux, where nDCG@10 reaches only $0.128$ and $0.018$,
respectively.

A plausible explanation is that the attack populations in these provenance
workloads are extremely small and behaviorally heterogeneous. When anomaly
processes activate different rare-rule subsets, they share insufficient
evidence for reliable query-conditioned retrieval. However, the present
experiments do not directly isolate profile sparsity or heterogeneity as the
causal source of the observed performance. A dedicated profile-coverage and
shared-rule analysis is therefore left for future work. These results motivate
hybrid atomic--rare representations and multi-query retrieval rather than a
claim of universal rare-rule superiority.

\paragraph{Heuristic adaptive calibration without label selection.}
The adaptive rarity scale is driven by an unlabeled upper-tail statistic
of a preliminary RareSense fit. The global coverage multiplier is fixed
heuristically to $\kappa=5$ and used unchanged for all workloads; it was
not selected using class labels, validation queries, or held-out
validation data. Consequently, only the unlabeled statistic $\hat\rho$
determines the workload-specific values of $\tau$ and $L$. The preliminary
fit adds an explicit extra mining pass, which is included in the reported
runtime.

\paragraph{Transductive evaluation.}
The reported dictionaries are mined unsupervised from the evaluation
collection.  No labels are used, but the setup is transductive.  In deployment,
the dictionary should be learned from historical reference data and then frozen
or updated on a schedule.  A future inductive evaluation should explicitly
separate mining and retrieval periods.

\paragraph{High-contamination workloads.}
One UWF workload has an anomaly ratio close to 48\%.  Such a setting violates
the everyday intuition that anomalies are globally rare.  We retain it as a
stress test because RareSense mines rare \emph{combinations}, not rare class
labels, but results on this workload should not be interpreted as a realistic
contamination scenario.

\paragraph{Indexing and scale.}
The present contribution is the similarity model, not a new index structure.
Metric and inverted-index compatibility provide a path to scalable exact or
approximate retrieval, but the current experiments do not claim a new
state-of-the-art indexing algorithm.

\paragraph{Explainability.}
The explanation is faithful by construction: shared-rule contributions sum
exactly to the similarity.  This is stronger than a post-hoc feature attribution
for the specific question ``why was this neighbor retrieved?''  It does not,
however, prove causal meaning of the rules; domain interpretation remains the
responsibility of the analyst.

\section{Reproducibility}
\label{sec:repro}

Unless otherwise stated, the reported method is adaptive \method{}.
Its preliminary fit uses $\tau_0=0.01$, $c_{\min}=0.95$, and $L_0=4$;
$\hat\rho$ is computed by Eq.~\eqref{eq:rho_hat}, the final support ceiling
by Eq.~\eqref{eq:tau_auto}, and $L$ by Eq.~\eqref{eq:L_auto}. The adaptive support ceiling uses the fixed heuristic multiplier
$\kappa=5$, corresponding in the implementation to
$\tau=\operatorname{clip}(5\hat{\rho},0.001,0.10)$. The multiplier was
not selected using class labels, validation queries, or a held-out
validation set, and the same value is used for every workload. The fixed reference \methodfixed{} uses $\tau=0.01$, $L=4$, and
$c_{\min}=0.95$. Rule weights use $\eta=1$,
$\alpha=\beta=\gamma=\zeta=1$, and $\delta=0.5$ in
Eq.~\eqref{eq:weight}. Stability is estimated from $B=10$
subsamples; note
that the weight exponent $\alpha$ is distinct from the adaptive coverage
multiplier $\kappa$.

Both partition rules and exact closure rules are generated in the reported
implementation. Query sets are fixed and reused across all retrieval methods;
Jaccard, IDF-Jaccard, cosine, TF--IDF cosine, and \method{} therefore see
identical queries, candidate sets, and binary relevance labels in the primary
comparison. Empty RareSense query profiles are retained and evaluated under the
fixed all-zero-similarity tie policy rather than discarded.

RareSense and the atomic similarity baselines are deterministic once the data,
parameters, query set, and tie-breaking policy are fixed. A common fixed random
seed of 42 is used whenever stochastic initialization or subsampling applies,
including for the AutoEncoder, DIF, LUNAR, and the subsampled OC-SVM
configuration. The standard deviations reported in
Table~\ref{tab:xai_deletion} are exact population standard deviations over
the corresponding deletion subsets, not measures of run-to-run variability.

Labels and attack-category fields are excluded from transaction construction
and from all test-time parameter adaptation.

The preprocessing scripts, experiment configurations, processed
transaction matrices, canonical result files, and scripts used to
regenerate the reported tables, figures, macro-averages, and statistical
tests are available from the corresponding author upon reasonable request.
\section{Conclusion}
\label{sec:conclusion}

We introduced \method{}, a rarity-aware similarity model for sparse
transactional anomaly search.  Minimal rare itemsets serve as an intermediate
mining substrate; reliable association-rule evidence defines the final
coordinates.  Objects are compared through weighted Jaccard overlap in the
rare-rule space, and the same shared rules provide an exact additive explanation
of every retrieved neighbor.

The empirical conclusion is deliberately nDCG-first. Across 27
workloads from four benchmark families, the label-free adaptive
\method{} configuration reaches a macro-average nDCG@10 of
approximately $0.696$, compared with $0.645$ for TF--IDF cosine, the
strongest atomic baseline on average. \method{} exceeds the strongest
atomic similarity on 18 of the 27 workloads. The omnibus comparison
is statistically significant, and corrected paired comparisons favor
\method{} over each atomic baseline.

The gain is strongest on UWF, remains positive across the general
categorical benchmarks, is effectively tied with the strongest atomic
similarity on NSL-KDD, and is weaker on several DARPA workloads. For
the secondary global-ranking task, \method{} obtains the highest
observed macro-average AUROC while remaining statistically comparable
to several strong dedicated anomaly detectors.

The results therefore support a precise rather than universal claim:
rare-rule coordinates provide a useful complementary similarity geometry when
related anomalies share repeatable rare combinatorial evidence. When relevance
is encoded by common atomic features, or when rare-rule overlap is too sparse,
atomic similarity can be preferable.

Future work will combine atomic and rare-rule coordinates in a dual-granularity
space, evaluate inductive historical-to-future retrieval, and implement
specialized inverted/metric indexes for sub-second search at larger scale.

\section*{Statements and Declarations}
\subsection*{Competing Interests}
The authors have no relevant financial or non-financial interests to
disclose.

\subsection*{Data and Code Availability}
All datasets used in this study are publicly available. The DARPA
Transparent Computing/ADAPT data are available at
\url{https://gitlab.com/adaptdata}. The UWF datasets are available at
\url{https://datasets.uwf.edu/}. The NSL-KDD dataset is available at
\url{https://www.kaggle.com/datasets/hassan06/nslkdd}. The general
categorical benchmarks are available through ADRepository at
\url{https://www.dbs.ifi.lmu.de/research/outlier-evaluation/DAMI/}.
The code and reproducibility materials used to produce the reported
results are available from the corresponding author upon reasonable
request.

\subsection*{Author Contributions}
SB, TR contributed to conceptualization,
methodology, software, formal analysis, investigation, visualization,
and preparation of the original manuscript. Both authors read and approved the final
manuscript.

\subsection*{Funding}
No funds, grants, or other support were received for conducting this
study. 
\subsection*{Ethics Approval}
Not applicable. This study uses publicly available benchmark datasets
and does not involve human participants or animals.

%
%

%

\bibliographystyle{ieeetr}
\bibliography{bibliography}

\end{document}